\newtheorem{lemma}{Lemma}
\newtheorem{lem}{Lemma}[section]
\newtheorem{thm}{Theorem}[section]
\theoremstyle{plain}
\newtheorem{Theorem}{Theorem}
\title{Mathematical Modeling and Optimal Control of  Untrue Information : Dynamic SEIZ in Online Social Networks }
\begin{document}
\maketitle

\centerline{\scshape Fulgence Mansal }
\medskip
{\footnotesize

 \centerline{Universit\'e Catholique de l'Afrique de l'Ouest/ UUZ}
   \centerline{ Laboratory Decision Mathematics and Numerical Analysis (LMDAN / UCAD)}

} 

\medskip

\centerline{\scshape  Ibrahima Faye}
\medskip
{\footnotesize

 \centerline{Universit\'e Alioune DIOP de Bambey, Senegal}
   \centerline{Laboratory Decision Mathematics and Numerical Analysis  (LMDAN)}

}

\maketitle 
\begin{abstract}
We propose to model the phenomenon of the spread of a rumor in this paper. We manipulate a model that is based on SEIR model that specializes in spreading rumors. In the second part, we introduce a control strategy to fight against the diffusion of the rumor. Our main objective is to characterize the three optimal controls that minimize the number of spreaders, susceptibles who enter and spread the rumor, and skeptics. For that matter, using the maximum principle of Pontryagin, we prove the existence and give characterization of our controls. To illustrate the theoretical results obtained, numerical simulations are given to concretize our approach.
\end{abstract}

\noindent

\section{Introduction}

The phenomenon of rumor is a complex phenomenon that has eluded man since ancient times, where it intersects many factors and interventions, including what is natural, sociological, economic, and psychological. Communities have known over the years the emergence of many rumors that have spread widely among them; it was also the focus of interaction and analysis by the commanders of these societies throughout history \cite{Allport}; human beings have fabricated rumors and disseminated them for political, economic, and social purposes \cite{Bordia} , where they are exploited to achieve commercial profits or to achieve victories in wars by dissolving fear and surrender within the enemy or with holding confidence in their leaders. The phenomenon of rumor has known many changes in its composition, in line with the change that societies know and the development of daily life in general with the increasing use of technological instruments and modern technologies in communication within communities. This phenomenon has witnessed a dramatic rise and an increase in the speed of its spread. This increase contributes significantly to huge consequences on the other hand. The development of the phenomenon of rumors and the strength of their influence and impact within societies gave this phenomenon another dimension \cite{Ghazzali} , as it became used by the media and intelligence in competition between countries and what is known as propaganda and polemic or buzz by publishing some false news in whole or in part to influence the opinions of voters by raising or decreasing the popularity of politicians as happened in the elections between Trump and Hillary where Hillary was the most popular and was the favorite to win until the last weeks before the presidential election \cite{Sabato} , where some of the specialized communication agencies published many news about Hillary contributed significantly to influence public opinion tendency to Trump, who eventually won. 

Mathematical modeling is one of the most important applications of mathematics that contribute to the representation and simulation of social, economic, biological, and ecological phenomena and convert them into mathematical equations that are formulated, studied, analyzed, and interpreted see \cite{Maki}. In this context, many researchers have developed different mathematical models representing the dynamics of the rumor \cite{Rapoport} . 

 In the work \cite{Ndii} , authors gave a review and a study of several mathematical models of rumor’s propagation.

Related Work. In 1964, Goffman and Newill developed in  \cite{Goffman}  a new concept for modeling the transmission of ideas within a society based on the mathematical model $SIR$ due to the great similarity between the two phenomena. 

With the development of societies and the emergence of modern technological means (transport communication), new factors have emerged that further complicate the phenomenon of rumor and contribute to the large spread of rumors; 

As an example, in the work done by Luıs M.A. Bettencourt et al. \cite{Daley}  , the authors  proposed a new model taking into account new factors by extending the $SIR$ model to a $SEIZR$ model with two additional compartments.

With the emergence of social networks and their impacts on communication within communities where they are taking more and more space within the community, it became clear that they must be taken into account as major intervening in the spread of rumors; in this context many of the works that adopted this hypothesis have been produced. 

To reduce the negative effect of rumor propagation, in this paper, we introduce a compartmental model of rumor propagation, which considers the rumor refutation of public and information feedback  \cite{Chen} . 

Compartmental models are a mathematical approach applied to measure and predict the spread of various infectious diseases. The method of misinformation diffusion is usually a similar approach as a virus spreading process. In transmission epidemics, there is each user infected with viruses and can become susceptible to viruses. In \cite{Bettencourt}, the authors proposed the $SEIZ$ model where the skeptics are the individuals becoming immune to infection. Although it is similar to the removed (R) individuals, skeptic transitions directly from the susceptible state and their interaction will still affect different compartments as well. In this way in \cite{Isea} authors proposed a model where the rumor spreads between two different scenarios and which do not share information with each other.

In order to demonstrate the effectiveness of the model we have proposed, we will present a numerical simulation with the following figure so that we can see how well the model adapts to reality. Initial values are approximate data that we suggested after studying and researching some statistics about the users of social networks; the values are attached in the table.

The key contributions of this paper are:  We demonstrate the capability of the  $SEIZ$ model to quantify compartment transition dynamics. We showcase how such information could facilitate the development of screening criteria for distinguishing rumors from real news happenings. 
\\

The paper is organized as follows. In section 2 is given the model formulation.
In section 3, we give some basic models on the model.
Section 4 is devoted to optimal control problem and in the last section, numericals simulations are given. 
In Section 5, we give the concluding remarks.

\section{ Model formulation}
Compartmental models are a mathematical approach used to evaluate and predict the spread of various infectious diseases \cite{Estabrook}. At the beginning, mathematical models for the rumors were considered merely speculative and imprecise, but for the fact that rumor spreading is now seen like the transmission of disease \cite{Kermack}. The spreading of rumor is in many ways similar to the spreading of epidemic infection by the spreader or the infections to notify or infect the susceptible \cite{Anderson}. $SEIZ$  model is a compartmental model that breaks the population into distinct compartments and establishing parameters for the rates at which the population transitions between compartments. These parameters are obtained by looking at the relationships between each class of the population and making assumptions about the disease.

One drawback of the SIS model is that once a susceptible individual gets exposed to disease, he can only directly transition to infected status. In fact, especially on Twitter, this assumption does not work well; people’s ideologies are complex and when they are exposed to news or rumors, they may hold different views, take time to adopt an idea, or even be skeptical to some facts. In this situation, they might be persuaded to propagate a story, or commence only after careful consideration themselves. Additionally, it is quite conceivable that an individual can be exposed to a story (i.e. received a tweet), yet never post a tweet themselves.

Based on this reasoning, we considered a more applicable, robust model, the $SEIZ$ model which was first used to study the adoption of Feynman diagrams . In the context of Twitter, the different compartments of the $SEIZ$ model can be viewed as follows: Susceptible (S) represents a user who has not heard about the news yet; infected (I) denotes a user who has tweeted about the news; skeptic (Z) is a user who has heard about the news but chooses not to tweet about it; and exposed (E) represents a user who has received the news via a tweet but has taken some time, an exposure delay, prior to posting. We note that referring to the Z compartment as skeptics is in no way an implication of belief or skepticism of a news story or rumor. We adopt this terminology as this was the nomenclature used by the original authors of the $SEIZ$ model.

A major improvement of the $SEIZ$ model over the SIS model is the incorporation of exposure delay. That is, an individual may be exposed to a story, but not instantaneously tweet about it. After a period of time, he may believe it and then be promoted to the infected compartment. Further, it is now possible for an individual in this model to receive a tweet, and not tweet about it themselves. As shown in Figure (\ref{pred1}) , $SEIZ$ rules can be summarized as follows:
\begin{enumerate}
\item Skeptics recruit from the susceptible compartment with rate b, but these actions may result either in turning the individual into another skeptic (with probability $l$), or it may have the unintended consequence of sending that person into the exposed (E) compartment with probability $(1 - l)$.
\item A susceptible individual will immediately believe a news story or rumor with probability $p$, or that person will move to the exposed (E) compartment with probability $(1 - p)$.
\item Transitioning of individuals from the exposed compartment to the infected class can be caused by one of two separate mechanisms: 
\begin{itemize}
\item recruitment into the susceptible  compartment is in constant rate,
\item an individual in the exposed class has further contact with an infected individual (with contact rate $\rho$), and this additional contact promotes him to infected; 
\item an individual in the exposed class may become infected purely by self- adoption (with rate $\varepsilon$), and not from additional contact with those already infected.
\end{itemize}
\end{enumerate}

The $SEIZ$ model is mathematically represented by the following system of ODEs. A slight difference of our implementation of this model is that we do not incorporate vital dynamics, which includes the rate at which individuals enter and leave the population N. In epidemiological disease applications, this encompasses the rate at which people become susceptible (e.g. born) and deceased. In our application, a  topic has a net duration not exceeding several days. Thus, the net entrance and exodus of  users over these relatively short time periods is not expected to noticeably impact compartment sizes and our ultimate findings.

 From this extension, the $SEIZ$ model explored one more compartment Skeptics $(Z)$. 

In this model, the susceptibility immediately infected with probability $p$, and $(1-p)$ is the possibility of an individual transiting to the incubator class instead, from which they adopted. 

After  the contact of an infected and a Skeptic, the Skeptic succeeds to convince him that the information is false at a rate $\lambda Z I $ ; after a certain period, a portion of the infected decide not to spread the rumor at a rate $\delta I$.

$N(t)$ denotes the total population where the network has a disease-free status with $S^{*}=N, E^{*}=I^{*}=Z^{*}=0.$ Fig. 4 illustrates the relationship between each compartment.

\begin{figure}[H]
	\centering
		\includegraphics[width=0.6\textwidth]{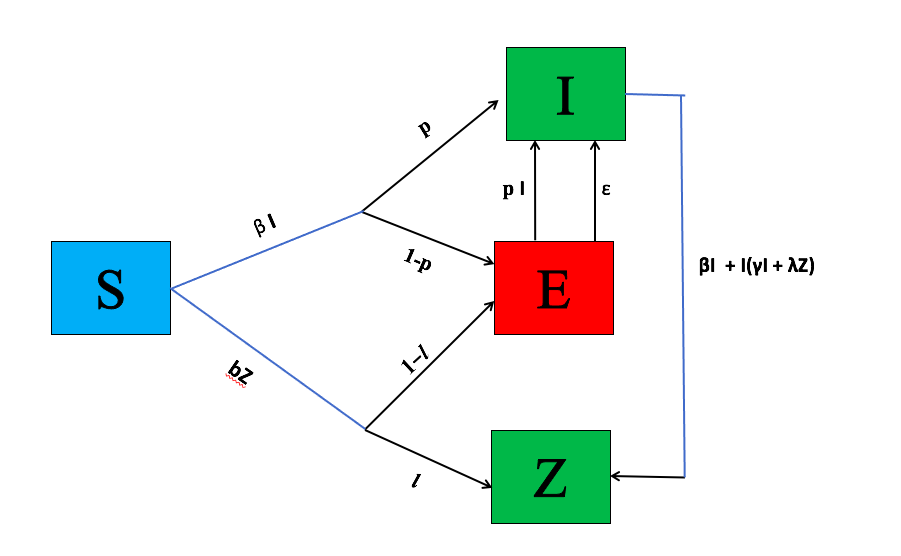}
	\caption{Transition rates of SEIZ Model.}
	\label{pred1}
 \end{figure} 
 
With the relationships between each compartment described by the parameters above, we have the following set of ODEs:

\begin{equation}
\left\{
\begin{array}{rl}
\dfrac{dS}{dt} & = \pi N- \mu S - \beta S \dfrac{I}{N} - b S\dfrac{Z}{N}  \\ \\
\dfrac{dE}{dt} & = (1-p)\beta S  \dfrac{I}{N} +  (1-l) b S\dfrac{Z}{N} - \rho E \dfrac{I}{N} - \varepsilon E - \mu E \\ \\
\dfrac{dI}{dt} & =p\beta S \dfrac{I}{N}  + \rho E \dfrac{I}{N} + \varepsilon E  - \delta I -    \lambda I \dfrac{Z}{N} - \mu I , \\ \\
\dfrac{dZ}{dt} & =lbS \dfrac{Z}{N} + \delta I +   \lambda I  \dfrac{Z}{N}  - \mu Z
\end{array}
\right.
\label{mainsys}
\end{equation}

In order to express system of equation (\ref{mainsys}) as a portion of the entire population, and since the recovered class does not appear in the first four equations of the system (\ref{mainsys}), we use the following substitution:

\begin{equation*}
s=  \dfrac{S}{N};\  i=  \dfrac{I}{N}; \ e=  \dfrac{E}{N}; \ z= \dfrac{Z}{N}
\label{mainsys3}
\end{equation*} 

is used. \\
Hence, the resulting system of equation shall be :

\begin{equation}
\left\{
\begin{array}{rl}
\dfrac{ds}{dt} & = \pi - \mu s - \beta si - b sz \\ \\
\dfrac{de}{dt} & = (1-p)\beta si +  (1-l) b sz - \rho ei - \varepsilon e - \mu e \\ \\
\dfrac{di}{dt} & =p\beta si  + \rho ei + \varepsilon e  - \delta i  -   \lambda i z - \mu i \\ \\
\dfrac{dz}{dt} & =lbsz + \delta i  +  \lambda iz - \mu z
\end{array}
\right.
\label{mainsys12}
\end{equation}

Table \ref{tab1}  provide description of each parameter of this model. This provides a more intuitive look into the model and relates the relationships above with the actual equations. The definitions of the parameters are consigned in Table \ref{tab1}. 
\begin{table}[H]
	\caption{Parameters model formulations and their descriptions}
	\begin{center}
		\begin{tabular}{c|c|c|c|c|c|c|c|c|cl} 
			\begin{tabularx}{15cm}{|c|X|c}
				\hline
				Parameter & Description   &  Units \\
				\hline
				$\pi $ & Recruitment rate into the Susceptible population& Per unit time \\
				\hline
				$\beta $ & Rate of contact between S and I  & Per unit time \\
				\hline
				$b $ & Rate of contact between S and Z & Per unit time \\
				\hline
				$\rho$ & Rate of contact between E and I  & Per unit time  \\
				\hline
				$\varepsilon$ & Incubation rate  & Per unit time  \\
				\hline
				$\dfrac{1}{ \varepsilon}$ & Average Incubation rate  & Per unit time  \\  
				\hline
				$p $ & Transmission rate S->I, given contact with I & Unit-less\\
				\hline
				$l $ & Transmission rate S->Z, given contact with Z & Unit-less\\
				\hline
				$1-l $ & S->E Probability given contact with skeptics  & Unit-less   \\
				\hline
				$1-p $ & S->E Probability given contact with adopters & Unit-less  \\
				\hline
				$\mu $ & Deconnect rate of network & Per unit time\\
				\hline
			\end{tabularx}
		\end{tabular}
	\end{center}
	\label{tab1}
\end{table}%

\section{Model Basic Properties}

Since the model monitors human populations, all the variables and the associated parameters are non-negative at all time. It is important to show that the model variables of the model remain non negative for all non-negative initial conditions.

\subsection{Positivity of the solution}
Since the model monitors population for a different class, it is required to show that all the state variables remain nonnegative for all times.
\begin{Theorem}
Let $\Omega=\left\{ (s,e,i,z) \in \mathbb{R}^4 : s(0) > 0 , e(0) > 0 , i(0) > 0, z(0) > 0 \right\}$, then the solution $\{s(t), e(t), i(t), z(t)\}$ of the system (\ref{mainsys12}) is positive for all $t \geq 0$.
\end{Theorem}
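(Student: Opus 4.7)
The plan is to exploit the fact that in each of the four ODEs the right-hand side, when one groups all terms multiplying the state variable itself, has the form (non-negative source term) minus (non-negative coefficient) times (the variable). This lets each equation be read as a scalar linear ODE in its own variable with time-varying but locally bounded coefficients, so the integrating-factor method produces an explicit representation that is manifestly non-negative.

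Concretely, I would first rewrite the $s$-equation as
\begin{equation*}
\frac{ds}{dt}+\bigl(\mu+\beta i(t)+b z(t)\bigr)s=\pi,
\end{equation*}
multiply by the integrating factor $\exp\!\bigl(\int_{0}^{t}(\mu+\beta i+bz)\,d\tau\bigr)$, and integrate from $0$ to $t$ to obtain
\begin{equation*}
s(t)=s(0)\exp\!\Bigl(-\!\int_{0}^{t}(\mu+\beta i+bz)\,d\tau\Bigr)+\int_{0}^{t}\!\pi\exp\!\Bigl(-\!\int_{\tau}^{t}(\mu+\beta i+bz)\,dr\Bigr)d\tau\ge 0.
\end{equation*}
I would apply the same procedure to the other three equations: for $e$, group the terms as $de/dt+(\rho i+\varepsilon+\mu)e=(1-p)\beta s i+(1-l)b s z$; for $i$, as $di/dt+(\delta+\lambda z+\mu-p\beta s-\rho e)i=\varepsilon e$; and for $z$, as $dz/dt+(\mu-lbs-\lambda i)z=\delta i$. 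In every case the source term on the right is a product of state variables (or a non-negative parameter), so if the other three variables remain non-negative up to time $t$, the integrating-factor formula gives a strictly positive lower bound for the fourth.

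The subtlety, which I would address in a short concluding step, is the circular dependence: the non-negativity of $e$ is inferred assuming $s,i,z\ge 0$, that of $i$ assumes $e\ge 0$, and so on. I would close this loop with a standard maximal-interval argument: set
\begin{equation*}
T=\sup\bigl\{\,t\ge 0:\ s(\tau),e(\tau),i(\tau),z(\tau)>0\ \text{for all }\tau\in[0,t]\bigr\},
\end{equation*}
note $T>0$ by continuity and the strict positivity of the initial data, and suppose for contradiction $T<\infty$. On $[0,T]$ all four integrating-factor representations are valid and yield strictly positive values at $t=T$, contradicting the fact that at least one component must vanish at $T$. Hence $T=\infty$ and the positive orthant is forward-invariant.

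The main obstacle I expect is not any single calculation but organizing the four coupled estimates cleanly; the bookkeeping of which variables are being assumed non-negative in which step is the only place where a careless presentation could obscure the argument. An alternative, equally valid route would be to check directly that the vector field points inward on each coordinate hyperplane (\emph{e.g.}\ $ds/dt|_{s=0}=\pi\ge 0$, $de/dt|_{e=0}=(1-p)\beta si+(1-l)bsz\ge 0$, $di/dt|_{i=0}=\varepsilon e\ge 0$, $dz/dt|_{z=0}=\delta i\ge 0$) and invoke the Nagumo/tangency criterion for invariance of $\mathbb{R}_{+}^{4}$, which would shorten the proof at the cost of citing an external result.
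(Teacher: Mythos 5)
Your argument is correct, and it is in fact more careful than the one the paper gives. The paper proceeds equation by equation by discarding terms to obtain constant-coefficient differential inequalities of the form $\dot{x}\ge -cx$ (e.g.\ $ds/dt\ge-\mu s$, $de/dt\ge-(\varepsilon+\mu)e$, $di/dt\ge-(\delta+\mu)i$, $dz/dt\ge-\mu z$), separates variables, and concludes $x(t)\ge x(0)e^{-ct}\ge 0$. You instead keep the full time-varying coefficient in an integrating factor and write down the variation-of-constants representation, which buys you two things the paper does not have. First, the paper's very first inequality $ds/dt\ge-\mu s$ is not actually justified: it amounts to claiming $\pi-\beta si-bsz\ge 0$, which does not follow from anything; the correct bound is $ds/dt\ge-(\mu+\beta i+bz)s$, i.e.\ exactly your formulation, which requires the nonconstant integrating factor. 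Second, the paper silently uses the non-negativity of the other components in each step (to discard $-\beta si$, $-bsz$, $-\lambda iz$, etc.) without ever closing the resulting circular dependence; your maximal-interval argument with $T=\sup\{t:\ \text{all components positive on }[0,t]\}$ is precisely the missing step, and your remark that the Nagumo tangency criterion would give the same conclusion is a legitimate shortcut. The only cost of your route is length and the bookkeeping you already flag; the payoff is that the proof actually holds together, whereas the paper's version, read literally, has a gap at the $s$-equation and an unacknowledged circularity.
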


\begin{proof}
Taking the first equation of (\ref{mainsys12}) we have

\begin{equation*}
\begin{array}{rl}
\dfrac{ds}{dt} & = \pi - \mu s - \beta si - b sz  \Longrightarrow \dfrac{ds}{dt}  \geq  - \mu s \\
 & \Longrightarrow \dfrac{ds}{s}  \geq - \mu dt   \Longrightarrow      \displaystyle  \int \dfrac{ds}{s}   \geq  \int -\mu dt \\
 & \Longrightarrow s(t)  \geq s(0) e^{-\mu t} \geq  0.
  \end{array}
\label{mainsys4}
\end{equation*} 

From the second equation of (\ref{mainsys12})

\begin{equation*}
\begin{array}{rl}
\dfrac{de}{dt} & = (1-p)\beta si +  (1-l) b sz - \rho ei - \varepsilon e - \mu e  \Longrightarrow \dfrac{de}{dt}  \geq -  (\varepsilon  + \mu )e \\
 & \Longrightarrow \dfrac{de}{e}  \geq -  (\varepsilon  + \mu )dt  \Longrightarrow    \displaystyle  \int \dfrac{de}{e}  \geq   \int -  (\varepsilon  + \mu )dt \\
 & \Longrightarrow e(t)  \geq e(0) e^{-(\varepsilon  + \mu) t} \geq  0.
  \end{array}
\label{mainsys4}
\end{equation*}

From third equation  of (\ref{mainsys12})

\begin{equation*}
\begin{array}{rl}
\dfrac{di}{dt} & =p\beta si  + \rho ei + \varepsilon e  - \delta i  -   \lambda i z - \mu i   \Longrightarrow \dfrac{di}{dt}  \geq -  (\delta + \mu )i \\
 & \Longrightarrow \dfrac{di}{i}   \geq -  (\delta + \mu )dt  \Longrightarrow     \displaystyle \int \dfrac{di}{i}  \geq   \int -  (\delta + \mu  )dt \\
 & \Longrightarrow i(t)  \geq i(0) e^{-(\delta  + \mu) t} \geq  0.
  \end{array}
\label{mainsys4}
\end{equation*} 
\end{proof}

From fourth equation  of (\ref{mainsys12})

\begin{equation*}
\begin{array}{rl}
\dfrac{dz}{dt} & = lbsz + \delta i  +  i (\gamma i + \lambda z) - \mu z  \Longrightarrow \dfrac{ds}{dt}  \geq - \mu z \\
 & \Longrightarrow \dfrac{dz}{z}  \geq  - \mu dt   \Longrightarrow   \displaystyle   \int \dfrac{dz}{z}   \geq  \int -\mu dt \\
 & \Longrightarrow z(t)  \geq z(0) e^{-\mu t} \geq  0.
  \end{array}
\label{mainsys4}
\end{equation*}

\subsection{Existence of the solution}
\begin{Theorem}
The region $D= \{ (s,e,i,z) \in \mathbb {R}^{4}_{+} : \ s+ e+ i + z \leq \dfrac{\pi}{\mu} \}$ is 
 positively invariant and attract all solutions in $\mathbb {R}^{4}_{+}$
 \end{Theorem}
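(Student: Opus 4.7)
The plan is to collapse the four-dimensional dynamics to a single scalar ODE for the total fraction $N(t):=s(t)+e(t)+i(t)+z(t)$. Summing the four equations of (\ref{mainsys12}) and tracking each transfer term, I expect the internal fluxes to cancel pairwise: the $\beta si$ group cancels because $-1+(1-p)+p=0$; the $bsz$ group cancels because $-1+(1-l)+l=0$; and each of $\rho ei$, $\varepsilon e$, $\delta i$, and $\lambda iz$ appears with equal and opposite coefficients in exactly two of the four equations. What survives is only the external source $\pi$ and the common exit rate $-\mu$ acting on each compartment, so I would obtain the linear scalar equation
\[
\frac{dN}{dt} = \pi - \mu N(t).
\]

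Next I would integrate this ODE explicitly (or apply the standard comparison lemma) to get
\[
N(t) = \frac{\pi}{\mu} + \Bigl(N(0)-\frac{\pi}{\mu}\Bigr)e^{-\mu t}.
\]
From this closed form two conclusions follow at once. For \emph{positive invariance}, if $N(0)\leq \pi/\mu$ the bracketed term is non-positive and $e^{-\mu t}\leq 1$, so $N(t)\leq \pi/\mu$ for every $t\geq 0$, which keeps the trajectory inside $D$. For \emph{global attraction}, whatever the initial datum in $\mathbb{R}^4_+$, the factor $e^{-\mu t}$ tends to $0$, so $N(t)\to \pi/\mu$ and the $\omega$-limit of every orbit sits in $D$.

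To finish, I would splice in the previous theorem, which already guarantees that each coordinate $s,e,i,z$ remains non-negative. Non-negativity of each component together with the bound $s+e+i+z\leq \pi/\mu$ places the full trajectory in $D\subset \mathbb{R}^4_+$ for all $t\geq 0$, and the attraction statement in $\mathbb{R}^4_+$ is exactly the asymptotic behaviour of $N(t)$ above.

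The only real obstacle is bookkeeping: one must verify carefully that the cross-compartment flux terms cancel correctly when the four equations are summed, because a single sign or coefficient error would spoil the linear form of $\dot N$. Once the identity $\dot N = \pi - \mu N$ is established, the remaining argument is essentially a one-line calculation, so no non-trivial analytical difficulty is expected beyond this algebraic check.
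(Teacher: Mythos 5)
Your proof is correct and follows essentially the same route as the paper: sum the four compartmental equations so that the internal transfer terms cancel, reduce to a scalar ODE for $N=s+e+i+z$, and then bound $N(t)$ to get both positive invariance and attraction. If anything, your version is the more careful one --- the paper sums the unnormalized system and writes $\dfrac{dN}{dt}=\pi N-\mu N$, which does not actually yield the comparison bound $N(t)\le N(0)e^{-\mu t}+\dfrac{\pi}{\mu}\bigl(1-e^{-\mu t}\bigr)$ that it then invokes, whereas your term-by-term cancellation on the normalized system correctly produces $\dfrac{dN}{dt}=\pi-\mu N$, from which the explicit solution gives both conclusions immediately.
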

 \begin{proof}
Adding all the equations from (\ref{mainsys}) , gives the rate of change of the total human population 
$\dfrac{dN}{dt}  = \dfrac{dS}{dt}  + \dfrac{dE}{dt} + \dfrac{dI}{dt} + \dfrac{dZ}{dt} $

\begin{equation*}
\begin{array}{rl}
\dfrac{dN}{dt}  &= \pi N - \mu S -   \mu E  - \mu I -  \mu Z \\
\\
 \dfrac{dN}{dt} & =  \pi N - \mu (S + E  + I + Z) \\
\\
  \dfrac{dN}{dt} & =  \pi N - \mu N 
\end{array}
\end{equation*} 

Since $ \dfrac{dN}{dt} =  \pi N - \mu N $ whenever $N(t) >  \pi$ , then $  \dfrac{dN}{dt} < 0$, implying $  \dfrac{dN}{dt}$ is
   bounded by $ \pi N - \mu N$.
Thus, a standard comparison theorem by \cite{{Lakshimikanthan}} can be used to show that : 

\begin{equation*}
N(t) \leq N(0) e^{-\mu t }  + \dfrac{\pi}{\mu} (1- e^{-\mu t })
\end{equation*} 

In particular,  $N(t) \leq  \dfrac{\pi}{\mu} $ if  $ N(0) \leq  \dfrac{\pi}{\mu} $ . Thus $D$ is positively invariant Therefore, the model is epidemiologically and mathematically well posed within the region. 

\end{proof}

\subsection{Basic Reproduction Number}
In this section, we obtained the threshold parameter that governs the spread of rumor like a disease which is called the basic reproduction number which is determined. To obtain the basic reproduction number, we used the next-generation matrix method so that it is the spectral radius of the next-generation matrix. 
The basic reproduction number $R_0$ is an important parameter to characterize the transmission of rumor. In this section, we discuss the existence and uniqueness of Rumor Free Equilibrium (RFE) of the model and its analysis. The model Equations (\ref{mainsys12}) has an RFE given by on a simple calculation: 

\begin{equation*}
M_{0} = \left(S^{*}, E^{*},I^{*}, Z^{*} \right)=  (\dfrac{\pi}{\mu}, \ 0 , \ 0 , \ 0).
\label{mainsys3}
\end{equation*}

The local stability of RFE given will be investigated using the next generation matrix method \cite{Ovaskainen}. We calculate the next generation matrix for the system of the question  (\ref{mainsys12})  by enumerating the number of ways that: 
\begin{itemize}
\item  new spreaders arise 
\item number of ways that individuals can move but only one way to create a spreader. 
\end{itemize}
\begin{center}
    \begin{figure}[ht!]
        \centering
        \includegraphics[width=5cm,height=5cm]{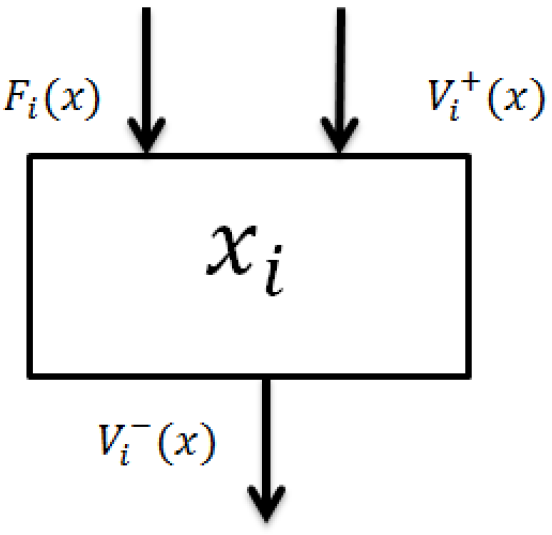}
        \caption{The entry and exit balance sheet}
    \end{figure}
\end{center}
Only the equations concerning contaminated and/or contagious individuals (disseminated information) are necessary. 
\begin{equation*}
\left\{
\begin{array}{rl}
\dfrac{de}{dt} & = (1-p)\beta si +  (1-l) b sz - \rho ei - \varepsilon e - \mu e \\ \\
\dfrac{di}{dt} & =p\beta si  + \rho ei + \varepsilon e  - \delta i  -   \lambda i z - \mu i \\ \\
\end{array}
\right.
\label{mainsys13}
\end{equation*}

We take stock of what goes in and what goes out of each compartment:
\begin{enumerate}
    \item We note $\mathcal{F}_{i}(x)$ rate at which {\bf new spreaders} enter compartment $i$.
    \item We note $\mathcal{V}_{i}^{+}(x)$  those which come from the other compartments by any other cause (displacement, healing, etc...).
    \item We note $\mathcal{V}_{i}^{-}(x)$ the speed of those leaving compartment  $i$ (for example, mortality, movement, change in epidemiological status, ...).
\end{enumerate}

We finally have : 

$$\dot{x}=\mathcal{F}_{i}(x) + \mathcal{V}_{i}(x); \; \; \; \; avec \; \; \; \;  \mathcal{V}_{i}(x) = \mathcal{V}_{i}^{+}(x) + \mathcal{V}_{i}^{-}(x)$$

We denote by $X_S$ the state without disease:
$$X_S = \{ x \in \mathbb{R}^p \mid x_i = 0, \ i = 1, \ldots, p \}$$

The following assumptions are made:
\begin{enumerate}
    \item   $x \geq 0$, \; \; $\mathcal{F}_{i}(x) \geq 0,$ \; \;  $\mathcal{V}_{i}^{+}(x) \geq 0,$ \; \;  $\mathcal{V}_{i}^{-}(x) \geq 0$
    \item If $x_i = 0$, then  $\mathcal{V}_{i}^{-}(x) = 0$. 
    
If there is nothing in a compartment, nothing can  nothing can come out of it. This is the essential property of a compartmental model.
 
    \item For $i \ge p$, then $\mathcal{F}_{i}(x) = 0$. Compartments with an index less than p are "uninfected". By definition, "infected" cannot appear in these compartments.
    \item If $x \in X_S$, then $\mathcal{F}_{i}(x) = 0$ and $\mathcal{V}_{i}^{+}(x) = 0$ for $i = 1,...,p$. If there are no germ carriers in the population, new "infected" cannot appear.
\end{enumerate}
The Jacobian of $f$ is written around the equilibrium point $(f(t, \bar{x}) = 0)$ without disease   $x^{*}$:

$J(x^{*}) = D \mathcal{F}(x^{*}) + D \mathcal{V}(x^{*})$

For $ F = \left [ \dfrac{\partial \mathcal{F}_{i} }{\partial x_j} \right ]_{1 \leq i,j \leq p} $ \; \; et \; \;   $ V = \left [ \dfrac{\partial \mathcal{V}_{i}}{\partial x_j} \right ]_{1 \leq i,j \leq p}$

Where
\begin{enumerate}
    \item $F \geq 0$ is a positive definite matrix and
     \item    $V$  is a Metzler matrix (i,e off-diagonal terms are positive), 
\end{enumerate}

We define $R_0$ then as follows :

$R_0 = \rho(FV^{-1})=det(FV^{-1} -\lambda I)$, Where $\rho$ the spectral ray.

the matrices $F$ and $V$ are defined as follows, respectively: :

\begin{center}
$\mathcal{F} = 
\begin{pmatrix}
(1-p)\beta s \frac{i}{N}   \\
 p\beta s \frac{i}{N}
\end{pmatrix}$
\end{center}

\begin{center}
 $\mathcal{V^{}} = \mathcal{V^{+}} + \mathcal{V^{-}} =
\begin{pmatrix}
- \rho e i - \varepsilon e - \mu e \\
 - \delta i - \mu i + \varepsilon e - \lambda i z 

\end{pmatrix}.$
\end{center}

So, let

F = rate of appearance of new spreaders into the compartment and, \\
V = rate of transfer into (out) of compartment

$$
F=
\begin{pmatrix}
 \begin {array}{cc} 0& \left( 1-p \right) \beta\,S_{{0}}
\\ \noalign{\medskip}0&p\beta\,S_{{0}}\end {array} 

\end{pmatrix}, \quad V=\begin{pmatrix}
\begin {array}{cc} \varepsilon +\mu&0\\ \noalign{\medskip}-
\varepsilon&\delta+\mu\end {array} 
\end{pmatrix}$$

$$
F=
\begin{pmatrix}
 \begin {array}{cc} 0& \left( 1-p \right) \beta\,\dfrac{\pi}{\mu}
\\ \noalign{\medskip}0&p\,\dfrac{\pi \beta\ }{\mu}\end {array} 

\end{pmatrix}, \quad V=\begin{pmatrix}
\begin {array}{cc} \varepsilon +\mu&0\\ \noalign{\medskip}-
\varepsilon&\delta+\mu\end {array} 
\end{pmatrix}$$
$$detV= (\varepsilon + \mu)(\delta + \mu)$$

Hence the next generation matrix with large domain is two dimensional and is given by $FV^{-1}$
\begin{equation}
K=FV^{-1}=
\begin{pmatrix}
 \begin {array}{cc} {\dfrac { \left( 1-p \right) \beta\,\varepsilon
\,\pi}{\mu\, \left( \varepsilon+\mu \right)  \left( \delta+\mu \right) }}
&{\dfrac { \left( 1-p \right) \beta\,\pi}{\mu\, \left( \delta+\mu
 \right) }}\\ \noalign{\medskip}{\dfrac {p\beta\,\varepsilon\,\pi}{\mu\,
 \left( \varepsilon+\mu \right)  \left( \delta+\mu \right) }}&{\dfrac {p
\beta\,\pi}{\mu\, \left( \delta+\mu \right) }}\end {array}
\end{pmatrix}
\label{mainsys4}
\end{equation}
Entry $K_{ij}$ represents expected number of secondary cases in compartment $i$ by an individual in compartment $j$

The dominant eigenvalue of (\ref{mainsys4}) is equal to $R_{0}$ , therefore we evaluate the characteristic equation of (\ref{mainsys4})  by using $det ( FV^{-1} - \lambda I ) =0$ ,which gives after some calculs $$\lambda ^{2} -  \lambda \left [  \dfrac{(1-p)\beta \varepsilon \pi}{\mu (\varepsilon + \mu)(\delta + \mu)} + \dfrac{\pi p \beta (\varepsilon + \mu)}{\mu (\varepsilon + \mu)(\delta + \mu)} \right] =0$$
Finally we have the following expresion of  $R_{0}$ like that : 
$$R_{0}= \dfrac { \beta  \,\pi  \,  \left( \varepsilon + p \mu \right)  }{\mu\, \left( \varepsilon+\mu \right)  \left( \delta+\mu \right) } $$

\begin{thm}
The system of equation (\ref{mainsys12}) is locally asymptotically stable if all its eigenvalues are less than zero at rumor free equilibrium $M_{0} =  (\dfrac{\pi}{\mu}, \ 0 , \ 0 , \ 0)$
\end{thm}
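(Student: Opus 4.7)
The plan is to reduce the stability statement to the standard linearized-stability result for smooth autonomous ODEs: if every eigenvalue of the Jacobian $J(M_{0})$ of the right-hand side of (\ref{mainsys12}) at $M_{0}$ has strictly negative real part, then $M_{0}$ is locally asymptotically stable (Hartman-Grobman). I will therefore compute $J(M_{0})$ explicitly, exploit its block structure to peel off trivial eigenvalues, and apply the Routh-Hurwitz criterion to the remaining characteristic polynomial.

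First I would evaluate the Jacobian at $M_{0}=(\pi/\mu,0,0,0)$. Because $e^{*}=i^{*}=z^{*}=0$, all bilinear terms in $e,i,z$ drop out, the partial derivatives with respect to $s$ in the $e$-, $i$-, and $z$-equations vanish identically, and only the entry $-\mu$ survives in the first column of $J(M_{0})$. This already yields $\lambda_{1}=-\mu<0$ as one eigenvalue and reduces the remaining spectral problem to the $3\times 3$ submatrix $J_{1}$ in the $(e,i,z)$ directions, with diagonal entries $-(\varepsilon+\mu)$, $p\beta\pi/\mu-\delta-\mu$, $lb\pi/\mu-\mu$ and off-diagonal couplings $(1-p)\beta\pi/\mu$, $(1-l)b\pi/\mu$, $\varepsilon$, $\delta$ inherited from the structure of (\ref{mainsys12}).

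Next I would expand the characteristic polynomial $\chi(\lambda)=\lambda^{3}+a_{1}\lambda^{2}+a_{2}\lambda+a_{3}$ of $J_{1}$ and invoke the Routh-Hurwitz criterion: all three roots lie in the open left half-plane iff $a_{1}>0$, $a_{3}>0$ and $a_{1}a_{2}>a_{3}$. The key algebraic identity I would rely on is that $a_{3}=-\det J_{1}$ factors as a positive multiple of $1-R_{0}$, with $R_{0}$ as computed in the preceding subsection, so that the abstract spectral hypothesis translates into the transparent threshold $R_{0}<1$, supplemented by the positivity of the effective removal rates $\varepsilon+\mu$, $\delta+\mu-p\beta\pi/\mu$ and $\mu-lb\pi/\mu$ that secure $a_{1}>0$.

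The main obstacle is the algebraic bookkeeping in $a_{1}a_{2}-a_{3}$. For a $3\times 3$ block with nontrivial couplings in both directions, this quantity expands into many summands of mixed sign, and exhibiting a manifestly nonnegative residue under $R_{0}<1$ will require grouping the cross-terms involving $\varepsilon\cdot(1-p)\beta\pi/\mu$ and $\delta\cdot(1-l)b\pi/\mu$ against the negative diagonal contributions in a controlled way. Once the three Routh-Hurwitz inequalities are verified, all four eigenvalues of $J(M_{0})$ sit strictly in the left half-plane, and the linearization theorem delivers the stated local asymptotic stability of $M_{0}$.
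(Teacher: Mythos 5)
Your plan follows the same route as the paper's own proof: evaluate the Jacobian at $M_{0}=(\pi/\mu,0,0,0)$, peel off the eigenvalue $\lambda=-\mu$ coming from the first column, apply the Routh--Hurwitz criterion to the remaining cubic, and observe that its constant term carries a factor of $1-R_{0}$ so that the spectral condition reduces to $R_{0}<1$. The one step you correctly flag as the main obstacle --- actually verifying the third Routh--Hurwitz inequality $a_{1}a_{2}>a_{3}$ --- is also left unverified in the paper, which only argues the positivity of the leading and constant coefficients, so carrying out that computation would complete rather than merely reproduce the published argument.
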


Next generation operator $(FV^{-1})$ gives rate at which individuals in compartment $j$ generate new infections in compartment $i$ times average length of time individual spends in single visit to compartment $j$

\begin{proof}
At rumor-free equilibrium point, the Jacobian  matrix is :
\begin{equation}
J(M_0)=
\begin{pmatrix}
-\mu & 0 & -\dfrac{\beta\pi}{\mu} & -\dfrac{b\pi}{\mu} \\
0 & -\varepsilon-\mu & \dfrac{(1-p)\beta\pi}{\mu} & \dfrac{(1-l)b\pi}{\mu} \\
0 & \varepsilon & \dfrac{p\beta\pi}{\mu}-\delta-\mu & 0 \\
0 & 0 & \delta & \dfrac{lb\pi}{\mu}-\mu
\end{pmatrix}
\label{mainsys5}
\end{equation}

Now we try to calculate the eigenvalues of (\ref{mainsys5}) by finding the characteristic equation using the formula   $det ( J - \lambda I ) =0$

\begin{equation}
det ( J - \lambda I ) = det
\begin{pmatrix}
-\mu - \lambda & 0 & -\dfrac{\beta\pi}{\mu} & -\dfrac{b\pi}{\mu} \\
0 & -\varepsilon-\mu - \lambda& \dfrac{(1-p)\beta\pi}{\mu} & \dfrac{(1-l)b\pi}{\mu} \\
0 & \varepsilon & \dfrac{p\beta\pi}{\mu}-\delta-\mu - \lambda& 0 \\
0 & 0 & \delta & \dfrac{lb\pi}{\mu}-\mu - \lambda
\end{pmatrix} =0
\label{mainsys6}
\end{equation}

From the Jacobian matrix of (\ref{mainsys6}), we obtained a characteristic polynomial:

\begin{equation}
(- \lambda - \mu ) (\lambda ^{3} + a_{2} \lambda ^{2} + a_{1} \lambda + a_{0} ) = 0,
\label{mainsys61} 
\end{equation}
where

\begin{equation}
\left\{
\begin{array}{rl}
a_{2} & = -  \dfrac { lb \pi }{\mu} + 3 \mu -  \dfrac { p \beta \pi }{\mu} + \delta +  \varepsilon  \\
\\
a_{1} & =  (\varepsilon+ \delta)(1- R_{0}) + (\dfrac { lb \pi }{\mu} - \mu)  (\dfrac { p \beta \pi }{\mu} - \delta -  \varepsilon -2 \mu )       \\
\\
a_{0}  & =   - \dfrac { (1-l)b \pi  \varepsilon \,\delta }{\mu}  (\dfrac { lb \pi }{\mu} - \mu) (\varepsilon+ \delta )(1- R_{0})
  \end{array}
  \right.
\label{mainsys7}
\end{equation} 

From (\ref{mainsys61} ) clearly, we see that : 

\begin{equation}
\begin{array}{rl}
- \lambda - \mu & = 0  \Longrightarrow  \lambda   = - \mu  < 0 
  \end{array}
\label{mainsys7}
\end{equation} 

 \mbox { or } 
 
 \begin{equation}
\begin{array}{rl}
 \lambda ^{3} + a_{2} \lambda ^{2} + a_{1} \lambda + a_{0}  & =   0.
  \end{array}
\label{mainsys71}
\end{equation}

From (\ref{mainsys71}) we applied Routh-Hurwitz criteria. By this criteria, (\ref{mainsys71}) has strictly negative real root if and only if   $a_{2} > 0 $  \ ,  $a_{0}  > 0 $  , and $a_{2} \ast a_{1} >  a_{0}.$

Obviously we see that $a_{2}$  is positive because  $ 3\mu + \delta +  \varepsilon  > \dfrac { \pi }{\mu} (lb+p \beta) $, but $a_{0}$  to be positive $1- R_{0}$ must be positive, which leads to $R_{0} < 1$ because $ \mu > \dfrac { lb\pi }{\mu} $. Therefore, RFE will be locally asymptotically stable if and only if $R_{0} < 1$.

\end{proof}
Thus this theorem  implies that for any given rumor in a population, it can be eliminated when $R_{0} < 1$.

\begin{thm}
When $R_0$ is less than or equal to one, the rumor-free equilibrium  is globally asymptotically stable.
\end{thm}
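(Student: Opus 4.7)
The plan is to establish global asymptotic stability of $M_0$ on the positively invariant region $D$ from the previous theorem using a Lyapunov function combined with LaSalle's invariance principle (linearisation alone would only reproduce the local result). Since $e^{*}=i^{*}=z^{*}=0$ at the rumor-free equilibrium, a natural first candidate is linear in the ``infected'' compartments: $V = c_1 e + c_2 i$ with $(c_1,c_2)$ a positive left Perron eigenvector of $FV^{-1}$. Guided by the computation that produced $R_0$, I would take $c_1=\varepsilon$ and $c_2=\varepsilon+\mu$, which makes the purely linear contributions in $e$ telescope and aligns the coefficient of $\beta s i$ exactly with the numerator $\varepsilon+p\mu$ of $R_0$.

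Differentiating $V$ along (\ref{mainsys12}) I expect
\begin{equation*}
\dot V = \beta s i(\varepsilon + p\mu) + \varepsilon(1-l)\, bsz + \mu\rho\, ei - (\varepsilon+\mu)(\delta+\mu)\, i - (\varepsilon+\mu)\lambda\, iz.
\end{equation*}
Using the invariance bound $s\leq \pi/\mu$ from the preceding theorem together with the closed-form expression for $R_0$ gives
\begin{equation*}
\beta s i(\varepsilon+p\mu) - (\varepsilon+\mu)(\delta+\mu)\,i \;\leq\; (R_0-1)(\varepsilon+\mu)(\delta+\mu)\, i,
\end{equation*}
which is non-positive exactly when $R_0\leq 1$. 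The term $-(\varepsilon+\mu)\lambda iz$ is favourable, and $\mu\rho ei$ is quadratic in the infected variables, hence dominated near the equilibrium.

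The main obstacle will be the strictly positive cross-term $\varepsilon(1-l)bsz$, which is not controlled by the linear part above because $z$ acts as an auxiliary infection route into $e$ that was suppressed in the next-generation formulation. I plan to absorb it by augmenting the Lyapunov function to $W = V + c_3 z$, choosing $c_3>0$ so that the $-c_3\mu z$ contribution arising from $\dot z$ dominates both $\varepsilon(1-l)bsz$ and the $c_3 lbsz$ term that $\dot z$ itself reintroduces. This amounts to a one-parameter linear feasibility condition, and it is exactly the hypothesis $\mu > lb\pi/\mu$ already invoked in the local-stability proof (equivalently $lb\pi<\mu^{2}$) that makes $z$ dissipative on $D$ and guarantees such a $c_3$ exists; tuning it to simultaneously keep the $ei$, $iz$, and $bsz$ coefficients non-positive is where I expect the bulk of the bookkeeping to sit.

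With $\dot W \leq 0$ on $D$ secured, LaSalle's invariance principle reduces the problem to identifying the largest invariant subset of $\{\dot W = 0\}$. Here $\dot W = 0$ forces $i=0$; then $\dot i = 0$ reduces to $\varepsilon e = 0$, hence $e=0$; with $e=i=0$ the $z$-equation becomes $\dot z = (lbs-\mu)z$, which is strictly dissipative on $D$ under the standing hypothesis, forcing $z=0$; and finally $\dot s = \pi - \mu s = 0$ gives $s = \pi/\mu$. Thus the invariant set collapses to $\{M_0\}$, which, combined with the local asymptotic stability already established, yields global asymptotic stability of the rumor-free equilibrium whenever $R_0\leq 1$.
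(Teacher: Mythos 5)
Your route is genuinely different from the paper's. The paper takes the one--compartment function $L=wI$, substitutes $s=\pi/\mu$, and silently discards the positive inflows $\varepsilon e$ and $\rho e i$ from $di/dt$; its estimate $\dot{L}\leq w(\delta+\mu)(R_0-1)i$ is therefore not actually justified by what is written. Your instinct to use the next-generation weights $V=\varepsilon e+(\varepsilon+\mu)i$ is the standard, more honest construction, and your expression for $\dot{V}$ is correct: the linear $e$-terms telescope, the coefficient of $\beta s i$ is exactly $\varepsilon+p\mu$, and the bound by $(R_0-1)(\varepsilon+\mu)(\delta+\mu)i$ follows from $s\leq\pi/\mu$. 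Your LaSalle endgame is also more carefully argued than the paper's one-line assertion about the largest invariant set.

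However, the plan does not close as stated, and both obstructions bite precisely at $R_0=1$, which your statement must cover. First, $\mu\rho\,ei$ is not ``dominated near the equilibrium'' in any usable sense: global asymptotic stability requires $\dot{W}\leq 0$ on all of $D$, and there the best available bound is $\mu\rho\,ei\leq\rho\pi i$, a strictly positive multiple of $i$ with nothing left in the budget $(R_0-1)(\varepsilon+\mu)(\delta+\mu)i=0$ to absorb it. Second, the augmentation $W=V+c_3z$ needed to kill $\varepsilon(1-l)bsz$ forces $c_3\geq \varepsilon(1-l)b\pi/(\mu^{2}-lb\pi)>0$, and $\dot{z}$ then reintroduces the term $c_3\delta i$, yet another positive multiple of $i$ that cannot be absorbed when $R_0=1$; you cannot send $c_3\to 0$ without losing control of the $z$-coefficient, so the ``one-parameter linear feasibility condition'' you defer to is in fact infeasible in the generality claimed. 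To finish you would need either extra hypotheses (strict inequality $R_0<1$ together with smallness conditions on $\rho$, $\delta$, $b$) or a genuinely nonlinear Lyapunov function. For what it is worth, the paper's own proof suffers from the same unaddressed terms --- it never confronts $\varepsilon e$, $\rho ei$, or the $z$-mediated route into $e$ at all --- so your attempt fails more transparently than the published one, but it does fail.
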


\begin{proof}
A suitable Lyapunov function $L$ to establish the global stability of the rumor-free equilibrium is defined as $L = wI$.
The derivative of the Lyapunov function with respect to time $t$ is:

\begin{equation*}
\begin{array}{rl}
 \dfrac{dL}{dt} &=  \dfrac{di}{dt}  = w [ \ ( p \beta \dfrac{ \pi}{\mu} ) i - (\delta + \mu) i ] \\
 \\
 &=   w (\delta + \mu)  [ \   \dfrac{ p \beta \pi}{\mu (\delta + \mu)}   - 1  ] i \\
 \\
 &=  w (\delta + \mu)  [ \   \dfrac{ p \beta \pi (\varepsilon + \mu) }{\mu (\delta + \mu) (\varepsilon + \mu) }   - 1  ] i \\
 \\
  & \leq  w (\delta + \mu)  [ \   \dfrac{  \beta \pi (\varepsilon + p \mu) }{\mu (\delta + \mu) (\varepsilon + \mu) }   - 1  ] i \ \ \ \ \mbox{ for } \ p \in [0,1] \\
 \\
 & \leq  w (\delta + \mu)  [ R_{0}  - 1  ] i \ \ \ \ \mbox{ for } \ p \in [0,1] \\

  \end{array}
\end{equation*} 

If $R_{0} \leq 1$, then $  \dfrac{dL}{dt} \leq 0$ holds. Furthermore, $ \dfrac{dL}{dt} \leq 0$ if and only if $I = 0$.  Hence, $L$ is Lyapunov function on $D$ and the largest compact invariant set in $ \{(I, M, G, R) \in D, \dfrac{dL}{dt}= 0\}$ is the singleton $(\dfrac{\pi}{\mu}, 0, 0, 0)$.  The global stability follows from LaSalle's \cite{LaSalle}   invariance principle, when $R_{0} \leq 1$. Hence, the disease-free equilibrium is globally asymptotically stable.

\end{proof}

\subsection{ The Endemic Equilibrium }
The endemic equilibrium is denoted by $M^{*} = \left(S^{*}, 0 ,I^{*}, Z^{*} \right)$  and it occurs when the disease persists in the community. To obtain it, we equating all the model equations (\ref{mainsys12}) to zero. Then we obtain for $E^{*}= 0$: 

\begin{equation*}
I^{*} = \dfrac{\pi}{\beta S^{*}} - \dfrac{\mu}{\beta} - \dfrac{pb}{\lambda} S^{*} - \dfrac{\delta}{\lambda \beta} - \dfrac{\mu}{\lambda \beta}, 
\end{equation*}

\begin{equation*}
Z^{*} = \dfrac{p \beta}{\lambda}S^{*} - \dfrac{\delta}{\lambda} - \dfrac{\mu}{\lambda}, 
\end{equation*}

When we substitute these expressions into the last equation of (\ref{mainsys12}), we obtained a characteristic polynomial of susceptible,

\begin{equation}
\begin{array}{lll}
&  lBS ( \dfrac{p \beta}{\lambda}S^{*} - \dfrac{\delta}{\lambda} - \dfrac{\mu}{\lambda}) + \delta (\dfrac{\pi}{\beta S^{*}} - \dfrac{\mu}{\beta} - \dfrac{pb}{\lambda} S^{*} - \dfrac{\delta}{\lambda \beta} - \dfrac{\mu}{\lambda \beta}) +  & \\  & \lambda  ( \dfrac{p \beta}{\lambda}S^{*} - \dfrac{\delta}{\lambda} - \dfrac{\mu}{\lambda}) (\dfrac{\pi}{\beta S^{*}} - \dfrac{\mu}{\beta} - \dfrac{pb}{\lambda} S^{*} - \dfrac{\delta}{\lambda \beta} - \dfrac{\mu}{\lambda \beta})- \mu ( \dfrac{p \beta}{\lambda}S^{*} - \dfrac{\delta}{\lambda} - \dfrac{\mu}{\lambda})=0 . 
\end{array}
\label{endemic1}
\end{equation}

From (\ref{endemic1}) we get the following result:

\begin{equation}
\begin{array}{lll}
&  S^{2} \dfrac{pb\beta}{\lambda} (l-p) + S (-\dfrac{lb\delta}{\lambda} - \dfrac{lbu}{\lambda} -pu -   \dfrac{\delta p}{\lambda}  - \dfrac{pu}{\lambda} + \dfrac{pbu}{\lambda}  - \dfrac{p \beta u}{\lambda} ) + \mu^{2}(\dfrac{1}{\lambda} + \dfrac{1}{\beta}  + \dfrac{1}{\lambda \beta} ) & \\  & +  p \beta \pi - \dfrac{\mu \pi}{\beta S}  = 0, 
\end{array}
\label{endemic2}
\end{equation}

which gives  
\begin{equation}
AS^{3} + BS^{2} + CS + D =0
\end{equation}
where 
\begin{equation}
\begin{array}{rl}
A & =  \dfrac{pb\beta}{\lambda} (l-p),\\ \\
B & =  - \dfrac{1}{\lambda} (lbS+ lb\mu + \delta p+ p\mu - pb\mu + p\beta \mu + p\lambda \mu),\\ \\
C & = p\beta \pi + \mu^{2} (\dfrac{\beta + \lambda + 1}{\lambda\beta}) + \dfrac{\delta \mu}{\lambda},\\ \\
D & = - \dfrac{\mu \pi}{\beta}.
\end{array}
\label{mainsys9}
\end{equation} 

\begin{lem}
An  endemic equilibrium point $M^{*}$ exists and is positive if $R_0 > 1.$
\end{lem}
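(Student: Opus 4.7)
The plan is to establish existence of a positive endemic equilibrium by locating a positive root of the cubic $P(S) = AS^{3} + BS^{2} + CS + D$ from (\ref{mainsys9}), and then verifying that the associated expressions for $I^{*}$ and $Z^{*}$ are strictly positive at that root. Since the cubic was derived by substituting the relations $I^{*}(S^{*})$ and $Z^{*}(S^{*})$ into the fourth equation of (\ref{mainsys12}), any admissible endemic point corresponds to a zero of $P$ lying in the physical window where these relations are positive.

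First I would observe that $P(0) = D = -\mu\pi/\beta < 0$, unconditionally on the parameters. The intermediate value theorem will therefore produce a positive root as soon as I can exhibit a point $\bar{S}>0$ with $P(\bar{S})>0$. The natural candidate is the rumor-free susceptible value $\bar{S} = \pi/\mu$, since this is the bifurcation point at which the endemic branch should emerge from the RFE. Substituting $\bar{S} = \pi/\mu$ into $P$ and simplifying, the expression reduces, after collecting the constant and cross terms against $D$, to a quantity whose sign matches the sign of $\beta\pi(\varepsilon + p\mu) - \mu(\varepsilon+\mu)(\delta+\mu)$, i.e. the sign of $R_0 - 1$. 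Thus $R_0 > 1$ gives $P(\pi/\mu) > 0$, and continuity of $P$ yields a root $S^{*} \in (0,\pi/\mu)$.

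The next step is to check that this $S^{*}$ lies in the admissible window that makes $I^{*}$ and $Z^{*}$ positive. From $Z^{*} = (p\beta S^{*} - \delta - \mu)/\lambda$ we need $S^{*} > (\delta+\mu)/(p\beta)$, and the analogous condition on $I^{*}$ yields a (weaker) lower bound of the same form. I would therefore evaluate $P$ at $S_{\ast}=(\delta+\mu)/(p\beta)$ and show, again under $R_0>1$, that $P(S_{\ast}) < 0$; combined with $P(\pi/\mu)>0$, the IVT then places a root $S^{*}\in\bigl(S_{\ast},\pi/\mu\bigr)$ which automatically certifies $Z^{*}>0$ and $I^{*}>0$.

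The main obstacle is the algebraic bookkeeping that ties the signs of $P(\pi/\mu)$ and $P(S_{\ast})$ cleanly to the single threshold $R_0 = 1$, because the leading coefficient $A = \tfrac{pb\beta}{\lambda}(l - p)$ changes sign with the ordering of $l$ and $p$. When $A>0$ the cubic grows to $+\infty$ and the existence of a positive root is automatic from $P(0)<0$, so the real content is locating that root inside the admissible window. When $A<0$, however, the cubic tends to $-\infty$ for large $S$ and care must be taken to rule out the unphysical roots lying above $\pi/\mu$ or below $S_{\ast}$; this is precisely where the explicit evaluation at $\pi/\mu$, combined with the $R_0>1$ hypothesis, forces exactly one sign change of $P$ on $(S_{\ast},\pi/\mu)$ and thereby delivers a unique admissible endemic equilibrium.
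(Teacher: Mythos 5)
The paper itself offers no proof of this lemma --- it is stated immediately after the cubic $AS^{3}+BS^{2}+CS+D=0$ and left unjustified --- so your IVT-on-the-cubic strategy is the natural thing to try and its skeleton (note $P(0)=D=-\mu\pi/\beta<0$, find a point where $P>0$, then confine the root to the window where $I^{*}$ and $Z^{*}$ are positive) is the standard route. However, as written your argument has a genuine gap: the two pivotal sign computations are asserted rather than carried out, and at least one of them cannot come out the way you claim. The endemic branch here is computed with $E^{*}=0$, so the coefficients $A,B,C,D$ contain no $\varepsilon$ at all, whereas $R_{0}=\dfrac{\beta\pi(\varepsilon+p\mu)}{\mu(\varepsilon+\mu)(\delta+\mu)}$ does; hence $P(\pi/\mu)$ cannot ``reduce to a quantity whose sign matches the sign of $R_{0}-1$'' --- the relevant threshold for this boundary equilibrium is $\dfrac{p\beta\pi}{\mu(\delta+\mu)}>1$, which (as the paper's own Lyapunov estimate $\dfrac{p\beta\pi}{\mu(\delta+\mu)}\leq R_{0}$ shows) is strictly stronger than $R_{0}>1$ whenever $p<1$.

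This is not a cosmetic issue, because it breaks the final step of your plan. Positivity of $Z^{*}=\dfrac{p\beta S^{*}-\delta-\mu}{\lambda}$ forces $S^{*}>S_{\ast}=(\delta+\mu)/(p\beta)$, while the first equilibrium equation forces $S^{*}<\pi/\mu$; the window $\bigl(S_{\ast},\pi/\mu\bigr)$ you want to trap the root in is therefore nonempty precisely when $p\beta\pi>\mu(\delta+\mu)$. Taking $p$ small and $\varepsilon$ large gives $R_{0}>1$ with that window empty, so under the stated hypothesis alone your IVT argument has nowhere to place the root. To repair the proof you would need either to replace the hypothesis by the correct threshold $p\beta\pi/(\mu(\delta+\mu))>1$ (or show it is implied by $R_{0}>1$ under extra assumptions), and in any case actually perform the evaluations $P(\pi/\mu)$ and $P(S_{\ast})$ --- which is delicate here because the paper's coefficient $B$ still contains $S$ and the derivation of (\ref{endemic2}) has inconsistencies that must be cleaned up before any sign bookkeeping is meaningful. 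You are also right to flag the sign of $A=\tfrac{pb\beta}{\lambda}(l-p)$, but that case analysis is moot until the window itself is secured.
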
 

\section{The Model with Controls}

 Now, we introduce our controls into system (\ref{mainsys12}). As control measures to fight the spread of rumor, we extend our system by including three kinds of controls $u$, $v$, and $w$. 
 
 \begin{enumerate}
\item The first control $u$ is to tell users that the information or publication is false and contains a malicious rumor. 
\item The second control $v$ is through to reduce the number of susceptible who entering and  spread the rumor. 
\item The last one $w$ is also applied by the sceptik or number of people who will question the fake news or those  deactivates an account after learning that it is fake or aimed at spreading the rumor. 
\end{enumerate}
With the aim of better understanding the effects of any control measure of these strategies, we introduce three new variables: $\pi_i, \ i= 1, 2, 3$. We note that $\pi_i=0$ in the absence of control, and $\pi_i=1$ in the presence of control. 

\begin{equation}
\left\{
\begin{array}{rl}
\dfrac{ds}{dt} & = \pi - \mu s - \beta si - b sz  - \pi_1 u s\\ \\
\dfrac{de}{dt} & = (1-p)\beta si +  (1-l) b sz - \rho ei - \varepsilon e - \mu e  - \pi_2 v e\\ \\
\dfrac{di}{dt} & =p\beta si  + \rho ei + \varepsilon e  - \delta i  -   \lambda i z - \mu i  - \pi_3 w i\\ \\
\dfrac{dz}{dt} & =lbsz + \delta i  +  \lambda iz - \mu z + \pi_1 u s 
\end{array}
\right.
\label{mainsys3}
\end{equation}

\subsection{Optimal Control Problem}

We define the objective functional as follows: 

\begin{equation}
J= \int_{0}^{T}  [ I(t) +  \dfrac{1}{2} Au^{2} (t) +\dfrac{1}{2}v^{2} (t) + \dfrac{1}{2}w^{2}  (t)]  dt
\label{functionnal}
\end{equation}

where $A>0$, $B>0$, and $C>0$  are the cost coefficients:

\begin{equation}
J(u^{*}, v^{*}, w^{*})= \min J(u,v,w) \mbox{ over } \Gamma
\end{equation}

The set of admissible controls is defined as follows  : 
\begin{multline}
\Gamma = \left\{ u,v,w \in L^{1}(0,T) \mbox{ such that } (u(t),v(t),w(t))\ \  \in [0,1] \times [0,1]\times [0,1] \  \forall \ t \in [0,T] \right\}
\end{multline}


\subsection{ Existence of an optimal control solution}

Let us consider an optimal control problem having the form (\ref{functionnal})
We analyze sufficient conditions for the existence of a solution to the optimal control problem (\ref{functionnal}). Using a result in Refs. Fleming and Rishel (\cite{Fleming} ), and Hattaf and Yousfi (\cite{Hattaf}), existence of the optimal control can be obtained.

Let us consider now  $L$ as the function that we integrate  $I(t) +  \dfrac{1}{2} Au^{2} (t) +\dfrac{1}{2}v^{2} (t) + \dfrac{1}{2}w^{2}  (t)$ and we get the following lemma
\begin{lemma}
The integrand $L( S , E , I, Z,  u, v, w)$ in the objective functional is convex on $\Gamma$ and there exist constants $c_1$ and $c_2$ such that 
$L(S , E , I, Z,  u, v, w) \geq     c_1 + c_2(|u|^2 + |v|^2 +|w|^2 )^{\frac{\alpha}{2}}$
\end{lemma}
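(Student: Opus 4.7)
The plan is to verify the two properties separately, both being quite direct given the structure of the integrand $L(S,E,I,Z,u,v,w) = I + \tfrac{1}{2}Au^2 + \tfrac{1}{2}v^2 + \tfrac{1}{2}w^2$.

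For convexity on $\Gamma$, I would regard $L$ as a function of the control triple $(u,v,w)$ (the state variables $S,E,I,Z$ appearing only through the linear term $I$, which does not affect convexity in $(u,v,w)$). The quadratic part $\tfrac{1}{2}Au^2 + \tfrac{1}{2}v^2 + \tfrac{1}{2}w^2$ has Hessian
\[
H = \begin{pmatrix} A & 0 & 0 \\ 0 & 1 & 0 \\ 0 & 0 & 1 \end{pmatrix},
\]
which is diagonal with strictly positive entries, hence positive definite. Adding the term linear in $I$ preserves convexity, so $L$ is (strictly) convex on $\Gamma$. Alternatively, I would verify directly from the definition that for any $(u_1,v_1,w_1),(u_2,v_2,w_2)\in[0,1]^3$ and $\theta\in[0,1]$, one has $L(\theta\cdot + (1-\theta)\cdot)\le \theta L(\cdot)+(1-\theta)L(\cdot)$, using convexity of $t\mapsto t^2$.

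For the coercivity bound, I would exploit the positivity of $I(t)$ established in the positivity theorem of Section 3.1: since solutions stay in $\mathbb{R}_+^4$, we have $I(t)\ge 0$ for all $t\ge 0$. Setting $c_0 := \tfrac{1}{2}\min(A,1) > 0$, I obtain
\[
L(S,E,I,Z,u,v,w) \;\ge\; \tfrac{1}{2}Au^2 + \tfrac{1}{2}v^2 + \tfrac{1}{2}w^2 \;\ge\; c_0\bigl(u^2+v^2+w^2\bigr).
\]
Choosing $\alpha = 2$, $c_1 = 0$ (or any nonpositive constant), and $c_2 = c_0 > 0$ yields
\[
L \;\ge\; c_1 + c_2\bigl(|u|^2+|v|^2+|w|^2\bigr)^{\alpha/2},
\]
as required. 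The exponent $\alpha=2>1$ is exactly what Fleming--Rishel need for the subsequent existence theorem.

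There is essentially no serious obstacle: the first part is immediate from the block-diagonal positive definiteness of the Hessian, and the second from the nonnegativity of $I$ guaranteed by the invariance result of Section 3. The only subtle point to flag is that the bound is uniform in the state variables precisely because the $I$ term is dropped nonnegatively; one should state explicitly that $I\ge 0$ is used here, so that the constants $c_1,c_2$ are genuinely independent of $(S,E,I,Z)$ along admissible trajectories.
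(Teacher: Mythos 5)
Your proposal is correct and follows essentially the same route as the paper: convexity of the quadratic-in-controls integrand, then the coercivity bound with $\alpha=2$ and $c_2$ the smallest of the half-coefficients. The only difference is cosmetic and in your favor: the paper sets $c_1=I$ while appealing to boundedness of the states, which is not literally a constant, whereas your choice $c_1=0$ justified by the positivity of $I$ from Section~3.1 is cleaner and makes the uniformity of the constants in the state variables explicit.
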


We have the following theorem:
\begin{thm}
Consider the control problem with system (\ref{mainsys3}). There exists an optimal control $(u^{*}, v^{*}, w^{*})\in  \Gamma^{3}$ such
that the control set $\Gamma$ is convex and closed.
\end{thm}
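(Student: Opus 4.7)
The plan is to invoke the classical existence theorem of Fleming and Rishel (already cited by the authors just before the statement), which yields an optimal control once five ingredients are in place: (i) the admissible set is nonempty, (ii) the control set is convex and closed, (iii) the right-hand side of the state system is bounded by an affine function of the state and control, (iv) the integrand of the cost is convex in the control variables, and (v) the integrand satisfies a coercivity estimate from below. Items (iv) and (v) are precisely the content of the preceding Lemma, so my task reduces to checking (i)--(iii).

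For (ii), the set $\Gamma$ is, by its very definition, the set of measurable selections into the cube $[0,1]\times[0,1]\times[0,1]$, which is compact and convex in $\mathbb{R}^{3}$; hence $\Gamma$ is convex and closed in $L^{1}(0,T)^{3}$. For (i), the constant triple $(u,v,w)=(0,0,0)$ is admissible, and with this choice the controlled system (\ref{mainsys3}) reduces to (\ref{mainsys12}), for which Theorems~1 and~2 already furnish a nonnegative solution confined to the compact invariant region $D=\{(s,e,i,z)\in\mathbb{R}^{4}_{+}:\ s+e+i+z\le\pi/\mu\}$. For (iii), I would write the state system abstractly as $\dot{x}=f(x,c)$ with $x=(s,e,i,z)$ and $c=(u,v,w)$; since $D$ is bounded and $f$ depends polynomially on $x$ (up to quadratic) and linearly on $c$, the restriction of $f$ to $D\times[0,1]^{3}$ admits a bound of the form $|f(x,c)|\le K_{1}+K_{2}|x|+K_{3}|c|$, which is what Fleming--Rishel requires.

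To make (iii) rigorous, I must first show that every admissible control generates a trajectory that stays in $D$, i.e.\ that $D$ is positively invariant for the controlled dynamics. Positivity of $s,e,i,z$ is obtained by repeating the arguments of Theorem~1, noting that the additional terms $-\pi_{1}us$, $-\pi_{2}ve$, $-\pi_{3}wi$ enter as nonnegative loss rates (bounded by $-s$, $-e$, $-i$ respectively) and that $+\pi_{1}us$ only adds to $\dot{z}$. Summing the four equations of (\ref{mainsys3}) yields
\begin{equation*}
\frac{dN}{dt}=\pi-\mu N-\pi_{2}ve-\pi_{3}wi\le \pi-\mu N,
\end{equation*}
so the comparison argument of Theorem~2 applies verbatim and $N(t)\le\pi/\mu$ is preserved. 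Thus trajectories remain in $D$.

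The convexity of $L(s,e,i,z,u,v,w)=i+\tfrac{1}{2}Au^{2}+\tfrac{1}{2}v^{2}+\tfrac{1}{2}w^{2}$ in $(u,v,w)$ is immediate from its diagonal quadratic Hessian $\mathrm{diag}(A,1,1)\succ 0$, and the coercivity estimate $L\ge c_{1}+c_{2}(u^{2}+v^{2}+w^{2})^{\alpha/2}$ holds with $\alpha=2$, $c_{1}=0$, $c_{2}=\tfrac{1}{2}\min(A,1)$, since $i\ge 0$ on $D$. Combining these facts with the Fleming--Rishel theorem then yields the existence of $(u^{*},v^{*},w^{*})\in\Gamma$ minimizing $J$. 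I expect the main obstacle to be the invariance of $D$ under the controlled dynamics: this is the only step that is not a direct rewrite of earlier material, and the verification hinges on the sign structure of the control terms, so care is needed in handling the $\dot{z}$ equation where control enters with a $+$ sign.
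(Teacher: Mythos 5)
Your proposal follows the same route as the paper: both invoke the Fleming--Rishel existence theorem and verify convexity and closedness of $\Gamma$ together with convexity and the coercivity bound for the integrand $L$. You are in fact more careful than the paper, since you additionally check nonemptiness of the admissible set, the affine growth bound on the dynamics, and the positive invariance of $D$ under the \emph{controlled} system (the paper only proves invariance for the uncontrolled one), and your choice $c_{1}=0$ is cleaner than the paper's $c_{1}=I$, which is not a constant.
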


\begin{proof}
The existence of the optimal control can be obtained using a result by Fleming and Rishel (\cite{Fleming} ), checking the following step:
\begin{itemize}
\item  By definition, $\Gamma$ is closed. Take any control $u_1, u_2 \in \Gamma$ and $\lambda \in [0, 1]$. Then $\lambda u_1+ (1-  \lambda)u_2 \geq 0.$. Additionally, we observe that $\lambda u_1  \leq \lambda (1-  \lambda)u_2 \leq (1- \lambda)$; then $ \lambda u_1+  (1-  \lambda)u_2 \leq  \lambda+(1- \lambda)=1$ and 
$0 \leq \lambda u_1 + (1-  \lambda)u_2 \leq 1$, for all $u_1,u_2 \in  \Gamma$ and $\lambda \in [0,1]$. \\ Therefore, $ \Gamma$ is convex and condition 1 is satisfied.

\item The integrand in the objective functional (\ref{functionnal}) is convex on $ \Gamma$. It rests to show that there exist constants $c_1, c_2 > 0$, and $\alpha > 1$ such that the integrand $L(S , E , I, Z,  u, v, w)$ of the objective functional satisfies :

$L(S , E , I, Z,  u, v, w) =  I(t) +  \dfrac{1}{2} Au^{2} (t) +\dfrac{1}{2}v^{2} (t) + \dfrac{1}{2}w^{2}(t) 
    \geq     c_1 + c_2(|u|^2 + |v|^2 +|w|^2 )^{\frac{\alpha}{2}}$
       
The state variables are bounded; let $c_1 =I$,
$c_2 = \inf (\frac{A}{2}, \frac{B}{2}, \frac{C}{2})$, and \ $\alpha = 2$, then it follows that   : 
\begin{equation} 
L(S , E , I, Z,  u, v, w) \geq     c_1 + c_2(|u|^2 + |v|^2 +|w|^2 )^{\frac{\alpha}{2}}
\end{equation} 
Then, from Fleming and Rishel \cite{Fleming} , we conclude that there exists an optimal control.
\end{itemize}
\end{proof}

\subsection{ Characterization of optimal controls}
Let us consider an optimal control problem having the form  (\ref{functionnal}). Pontryagin's Maximum principle \cite{Pontryagin} allows to use costate functions to transform the optimization problem to the problem of determining the pointwise minimum relative to $u^{*}$ ,\  $v^{*}$, and $w ^{*}$ of the Hamiltonian. The Hamiltonian is built from the cost functional (\ref{functionnal})and the controlling dynamics (2) derive the optimality conditions:

\begin{equation}
H = I(t) + \dfrac{1}{2} Au^2 + \dfrac{1}{2} Bv^2 + \dfrac{1}{2} Cw^2 + \sum_{i=1}^{n} p_{i} g_{i}
\end{equation}

where $g_{i}$ denotes the right side of the differential equation of the $i-$th state variables.

\begin{equation}
\begin{array}{lll}
H = &I(t) +  \dfrac{1}{2} Au^{2}  +\dfrac{1}{2}Bv^{2}  + \dfrac{1}{2}Cw^{2}  + p_{1} (\pi - \mu s - \beta si - b sz  - \pi_1 u s) &    \\     & +  \ p_{2} ((1-p)\beta si +  (1-l) b sz - \rho ei - \varepsilon e - \mu e  - \pi_2 v e) & \\
& + \ p_{3} ( p\beta si  + \rho ei + \varepsilon e  - \delta i  -   \lambda i z - \mu i  - \pi_3 w i)  & \\ 
& + \ p_{4} (lbsz + \delta i  +  \lambda iz - \mu z + \pi_1 u s  ) \\
\end{array}
\label{control}
\end{equation}

where the $p_{i}, \ i=1 \,....\, 4$  are the associated adjoints for the states $S , E , I, Z$. The optimality system of equations is found by taking the appropriate partial derivatives of the Hamiltonian (8) with respect to the associated state variable.

The following theorem is a consequence of the maximum principle.

\begin{thm}
Given an  optimal control $(u ^{*}, v^{*}, w ^{*})$ and corresponding solutions to the state system $S^{*} , E^{*} , I^{*}, Z^{*} $  that minimize the objective functional    $ J(u ^{*}, v^{*}, w ^{*})$    there exist adjoint variables  $p_{1}(t), \ p_{2} (t), \  p_{3} (t), \  p_{4} (t)), $ satisfying

\begin{equation}
\left\{
\begin{array}{rl}
\dot{p}_{1} & = -\left[p_{1}(-\mu - \beta i - b z - \pi_1 u) + p_{2}((1-p)\beta i + (1-l) b z) + p_{3}(p\beta i) + p_{4}(lbz + \pi_1 u)\right] \\ \\
\dot{p}_{2} & = -\left[p_{2}(-\rho e - \varepsilon - \mu - \pi_2 v) + p_{3}(\rho i + \varepsilon)\right] \\ \\
\dot{p}_{3} & = -\left[1 + p_{1}(-\beta s) + p_{2}((1-p)\beta s - \rho e) + p_{3}(p\beta s + \rho e - \delta - \lambda z - \mu - \pi_3 w)\right] \\ \\
\dot{p}_{4} & = -\left[p_{1}(- bs) + p_{2}(1-l)bs + p_{3}(\lambda i) + p_{4}(lbs + \lambda i - \mu)\right]
\end{array}
\right.
\label{mainsys9}
\end{equation}

with the transversality conditions 

\begin{equation}
\begin{array}{lll}
p_{1}(T)   & =   & 0     \\  
 
 p_{2}(T)   & =   & 0 \\
 
p_{3}(T)   & =   & 0 \\ 

p_{4}(T)   & =   & 0 \\
\end{array}
\label{control}
\end{equation}

Furthermore, we may characterize the optimal pair by the piecewise continuous functions  and for $\pi_1 = \pi_2 = \pi_3 = 1$

\begin{equation}
\begin{array}{lll}
u^ {*}(t) &=& \min \left\{ \max \left (0, \dfrac{ \pi_1 \, S}{A} (p_{1} - p_{4} )\right ), 1 \right \},\\
\\
v^ {*} (t)&=&  \min \left\{ \max \left (0, \dfrac{\pi_2\, p_{2} \,e}{B} \right ), 1 \right \},\\
\\
w^ {*} (t)&=&  \min \left\{ \max \left (0, \dfrac{ \pi_3\, p_{3} \,i}{C} \right ), 1 \right \},\\
\end{array}
\label{control}
\end{equation}

\label{thm4}
\end{thm}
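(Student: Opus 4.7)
The plan is to invoke Pontryagin's Maximum Principle, whose hypotheses are in force once existence of an optimal control has been established by the previous theorem and once one verifies that the state dynamics are $C^1$ in $(s,e,i,z,u,v,w)$ with the controls appearing linearly (so the Hamiltonian is $C^1$ in the state and convex—indeed linear—in the controls after the quadratic penalty is added). Throughout, I will work with the Hamiltonian $H$ displayed in equation (\ref{control}), whose construction already combines the Lagrangian $L$ of the cost functional with the four state equations weighted by costates $p_1,p_2,p_3,p_4$.

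First I would write down the adjoint system from $\dot{p}_k=-\partial H/\partial x_k$, where $x_1=s$, $x_2=e$, $x_3=i$, $x_4=z$. Differentiating $H$ term by term with respect to $s$ produces contributions $-\mu-\beta i-bz-\pi_1 u$ from the $p_1$-block, $(1-p)\beta i+(1-l)bz$ from the $p_2$-block, $p\beta i$ from the $p_3$-block, and $lbz+\pi_1 u$ from the $p_4$-block; negating yields the stated equation for $\dot{p}_1$. The same book-keeping applied to $e$, $i$, and $z$ gives the remaining three adjoint equations; the only delicate point is the inhomogeneous term $1$ appearing in $\dot{p}_3$, which comes from $\partial L/\partial i=1$ since $I(t)$ is the only state-dependent term in the running cost. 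The transversality conditions $p_k(T)=0$ are immediate because the objective functional has no terminal payoff (no Mayer term), so the costates have free right endpoint.

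Next I would derive the control characterization from the stationarity conditions $\partial H/\partial u=0$, $\partial H/\partial v=0$, $\partial H/\partial w=0$ on the interior of $\Gamma$. A direct differentiation gives
\begin{equation*}
\partial H/\partial u = Au - \pi_1 p_1 s + \pi_1 p_4 s,\qquad \partial H/\partial v = Bv - \pi_2 p_2 e,\qquad \partial H/\partial w = Cw - \pi_3 p_3 i,
\end{equation*}
so the unconstrained minimisers are $u=\pi_1 s(p_1-p_4)/A$, $v=\pi_2 p_2 e/B$, $w=\pi_3 p_3 i/C$. Because the Hamiltonian is strictly convex in each control (the second derivatives $A,B,C$ are positive), the pointwise minimum over $[0,1]$ is obtained by projecting these expressions onto $[0,1]$, which is exactly the $\min\{\max\{0,\cdot\},1\}$ clipping displayed in the theorem. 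Setting $\pi_1=\pi_2=\pi_3=1$ gives the stated formulas.

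The main obstacle I anticipate is bookkeeping rather than conceptual: keeping the signs consistent throughout the four adjoint equations (in particular the mixed terms coming from $\rho ei$ and $\lambda iz$, which contribute to two different adjoint equations each), and correctly identifying which terms of $\partial H/\partial s$ and $\partial H/\partial z$ feed into $\dot{p}_1$ and $\dot{p}_4$ respectively. Beyond that, the projection onto the admissible cube is standard, and the measurability/regularity of the resulting $(u^*,v^*,w^*)$ as piecewise continuous functions of $t$ follows from the continuity of the state and adjoint trajectories, which are absolutely continuous solutions of a locally Lipschitz ODE system on $[0,T]$.
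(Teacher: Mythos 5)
Your proposal is correct and follows essentially the same route as the paper: invoke Pontryagin's maximum principle, obtain the adjoint system from $\dot p_k=-\partial H/\partial x_k$ with transversality $p_k(T)=0$ (free terminal state, no Mayer term), and characterize the controls from $\partial H/\partial u=\partial H/\partial v=\partial H/\partial w=0$ followed by projection onto $[0,1]$. One remark: the careful term-by-term bookkeeping you promise would actually expose some slips in the adjoint system as stated (the coefficient $-\rho e$ in $\dot p_2$ should be $-\rho i$, $\dot p_3$ is missing the contribution $p_4(\delta+\lambda z)$ from the $z$-equation, and the $p_3$ term in $\dot p_4$ should carry the sign $-\lambda i$), so your derivation is, if anything, more reliable than the displayed equations.
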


\begin{proof}
The existence of optimal controls follows from Corollary 4.1 of Fleming and Rishel \cite{Fleming} since the integrand of $J$ is a convex function of $(u, v ,w )$, and the state system satisfies the Lipchitz property with respect to the state variables because the state solutions are $L^{\infty}$ bounded. The following can be derived from  Pontryagin's maximum principle (\cite{Pontryagin}):

\begin{equation*}
\dot p_{1}= - \dfrac{ \partial H}{\partial S}\,,  \dot p_{2}= - \dfrac{ \partial H}{\partial E}\,, \dot p_{3}= - \dfrac{ \partial H}{\partial I}\, \ \dot p_{4}= - \dfrac{ \partial H}{\partial Z} , 
\end{equation*}

with \, $p_{i}(T)=0 , for \ i=1,\ 2,\ 3,\ 4$. evaluated at the optimal controls and the corresponding states, which results in adjoint system of theorem  (\ref{thm4}). The Hamiltonian $H$ is minimized with respect to the controls at the optimal controls; therefore, we differentiate $H$ with respect to $u$ , $v $, and $w$ on the set  $\Gamma$ , respectively, thereby obtaining the following optimality conditions:

\begin{equation*}
\begin{array}{lll}
\dfrac{ \partial H}{\partial u} = A u(t) \, - p_{1}\pi_1 S  \, +  p_{4}\pi_1 S =0   \Longleftrightarrow  u(t)= \dfrac{ \pi_1 \, S}{A} (p_{1} - p_{4} ) \\ \\

\dfrac{ \partial H}{\partial v} = B v(t) \, - p_{2} \pi_2 e =0  \Longleftrightarrow  v(t)=  \dfrac{\pi_2\, p_{2} \,e}{B} \\ \\

\dfrac{ \partial H}{\partial w} = C w(t) \, - p_{3} \pi_3 i =0 \Longleftrightarrow  w(t)= \dfrac{\pi_3 \, p_{3} \, i}{C} \\ 
\end{array}
\end{equation*}\\

Solving for $u^{*}$ , $v^{*} $, and $w^{*}$, we obtain  for the bounds in $\Gamma$ of the controls,

\begin{equation}
\begin{array}{lll}
u^ {*}(t) &=& \min \left\{ \max \left (0, \dfrac{ \pi_1 \, S}{A} (p_{1} - p_{4} )\right ), 1 \right \},\\
\\
v^ {*} (t)&=&  \min \left\{ \max \left (0, \dfrac{\pi_2\, p_{2} \,e}{B} \right ), 1 \right \},\\
\\
w^ {*} (t)&=&  \min \left\{ \max \left (0, \dfrac{ \pi_3\, p_{3} \,i}{C} \right ), 1 \right \},\\
\end{array}
\label{control-proof}
\end{equation}

However, if $\pi_i=0$ where $i =1, 2, 3$  the controls attached to his case will be eliminated and removed.

By the standard variation arguments with the control bounds, we obtain the optimal solutions (\ref{control})
\end{proof}

\subsection{Numerical Simulation}

In this section, we present the results obtained by solving numerically the optimality system.This system consists of the state system, adjoint system, initial and final time conditions, and the control characterization. So, the optimality system is given by the following:

\begin{table}[H]
	\caption{Parameters model formulations and their descriptions}
	\begin{center}
		\begin{tabular}{c|c|c|c|c|c|c|c|c|clcl} 
			\begin{tabularx}{15cm}{|c|X|c|c}
				\hline
				Parameter & Description   &  Units &  Value   \\
				\hline
				$\pi $ & Recruitment rate into the Susceptible population& Per unit time &  Value  \\
				\hline
				$\beta $ & Rate of contact between S and I  & Per unit time  &  Value  \\
				\hline
				$b $ & Rate of contact between S and Z & Per unit time  &  Value \\
				\hline
				$\rho$ & Rate of contact between E and I  & Per unit time &  Value   \\
				\hline
				$\varepsilon$ & Incubation rate  & Per unit time  &  Value \\
				\hline
				$\dfrac{1}{ \varepsilon}$ & Average Incubation rate  & Per unit time  &  Value \\  
				\hline
				$p $ & Transmission rate S->I, given contact with I & Unit-less &  Value \\
				\hline
				$l $ & Transmission rate S->Z, given contact with Z & Unit-less  &  Value \\
				\hline
				$1-l $ & S->E Probability given contact with skeptics  & Unit-less &  Value   \\
				\hline
				$1-p $ & S->E Probability given contact with adopters & Unit-less &  Value  \\
				\hline
				$\mu $ & Deconnect rate of network & Per unit time  &  Value \\
				\hline
			\end{tabularx}
		\end{tabular}
	\end{center}
	\label{tab1}
\end{table}%
In this paragraph, we give numerical simulation to highlight the effectiveness of the strategy that we have developed in the framework of eliminating the rumor and limit its spread; the initial values are the same as in Table 1; with regard to other initial values, they proposed values after a statistical study.

\subsubsection{Numerical Simulation for $R_{0}<1$}
  
  Figure \ref{fig3}  illustrates the dynamics of SEIZ in the absence of controls, and we can see that the starting number is initially low and that the number of susceptible individuals has decreased over time. We note that from the outset, infected and susceptible individuals tend towards $0$ and that only skeptics have increased from their initial state to a higher number. This figure shows that if we have a low density of individuals, then within a few days or months, the false tends to disappear because of the skeptics.
  
 \begin{figure}[H]
	\centering
		\includegraphics[width=0.5\textwidth]{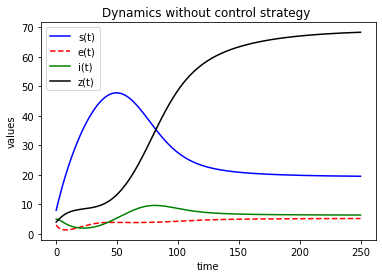}
	\caption{Dynamics of the model with the values $\pi = 10; \beta = 0.007; \mu = 0.5; \varepsilon = 0.06; \delta = 0.05; p = 0.09767; \lambda = 0.0084231; \rho = 0.21431; l = 0.005234; b = 0.00539.$}
			\label{fig3}
 \end{figure}

\begin{figure}[H]
  \begin{minipage}[b]{0.5\linewidth}
   \centering
   \includegraphics[width=8cm,height=6cm]{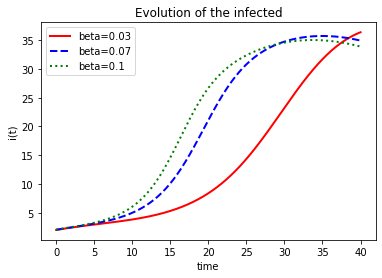}     
  \end{minipage}
\hfill
  \begin{minipage}[b]{0.48\linewidth}
   \centering
   \includegraphics[width=8cm,height=6cm]{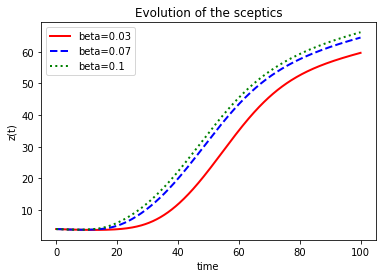}     
  \end{minipage}
  \caption{Dynamics of infected and Sceptiks for different values de $\beta$ and all the other values constants  for  $R_{0}<1$ }
  \label{fig4}
\end{figure}

\subsubsection{Numerical Simulation for $R_{0}>1$}
  
Figure \ref{fig12} represents the dynamics of SEIZ in the absence of controls and we can see that the number of susceptible people has increased from its initial state to a number to stabilize. We note that from the outset, the infected become more and more numerous in sharing false information until they stabilize, while the skeptics remain much lower than the infected. This balance shows that false information is quickly relayed.

 \begin{figure}[H]
	\centering
		\includegraphics[width=0.5\textwidth]{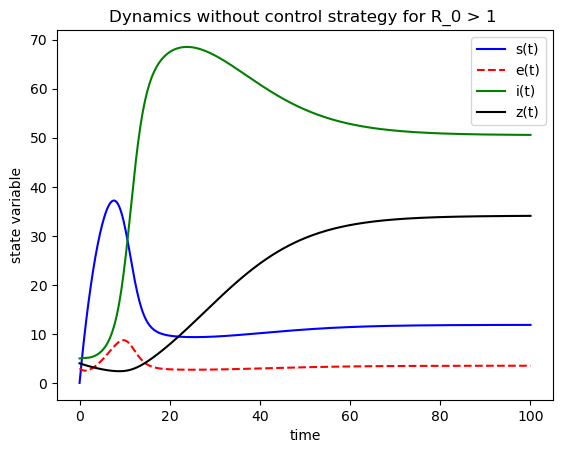}
	\caption{Dynamics of the model with the values $\pi = 50; \beta = 0.07; \mu = 0.5; \varepsilon = 0.06; \delta = 0.05; p = 0.09767; \lambda = 0.0084231; \rho = 0.21431; l = 0.005234; b = 0.00539.$}
			\label{fig12}
 \end{figure}
 
In order to increase the stifling rate A. we should generally improve the level of scientific knowledge of the public in society. This way, the public can clearly identify
general rumors and not easily believe and spread them. Figure  \ref{fig13}  illustrates how
the number of the sceptiks people changes over time with different Rate of contact between S and I noted $\beta$. From   \ref{fig13} , we can establish that
when the Rate of contact between S and I  decreases, the number of sceptiks people decreases.
Therefore reducing this Rate of contact can control the spread of rumors.
\begin{figure}[H]
  \begin{minipage}[b]{0.5\linewidth}
   \centering
   \includegraphics[width=8cm,height=6cm]{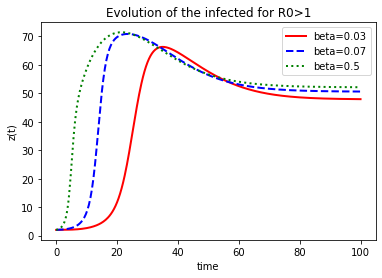}     
  \end{minipage}
\hfill
  \begin{minipage}[b]{0.48\linewidth}
   \centering
   \includegraphics[width=8cm,height=6cm]{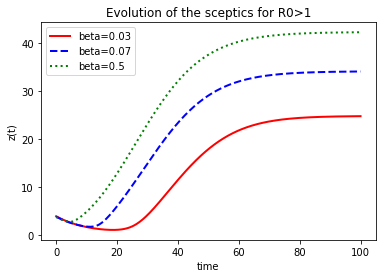}     
  \end{minipage}
  \caption{Dynamics of infected and Sceptiks for different values de $\beta$ and all the other values constants  for  $R_{0}>1$}
  \label{fig13}
\end{figure}

 \subsubsection{Case 1: Applying Only Control $u$}
Since it will be applied to ignorant individuals, we will be limited to displaying and comparing the curves of infected and Sceptiks in  case with control strategy. In this scenario, we simulate the case where we apply a single control u with which we inform a portion of the ignorants by the false information, so we win this proportion in favor of stiflers. We observe from  our Figure that, some days after the implementation of the strategy, the impact will start to appear as we note that the number will gradually decrease until it stabilizes. On the other hand, the number of Sceptiks in this model will suddenly start to rise. This change is probably due to the fact that the control is aimed at telling the ignorant people to turn to stifler ones. In this way, we win a number of people in the fight against the spread of false news.
  
 \begin{figure}[H]
	\centering
		\includegraphics[width=0.7\textwidth]{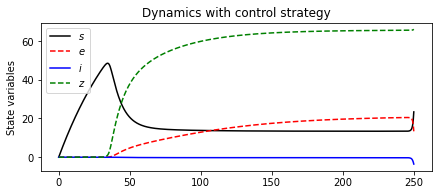}
	\caption{Dynamics of the model with the control $u$}
			\label{fig1}
 \end{figure}
  \begin{figure}[H]
	\centering
		\includegraphics[width=0.6\textwidth]{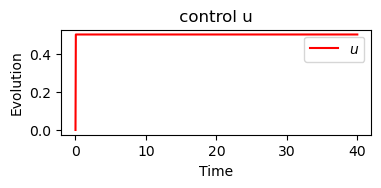}
	\caption{Optimal control u for  SEIR optimal control problem}
		\label{fig1}
 \end{figure}

 \subsubsection{Case 2: Applying Only Control $v$} 
Here, we will implement only control v, that the effect of the strategy will appear  on the number of infected as the number will gradually decrease. This rapid change is attributed to the fact that control directly targets this group. 
 
In the second scenario, we apply a single v-control, but this time one that focuses on broadcasters. The figure shows that the number of diffusers has decreased, but this time we see that the number of sceptoks tends towards zero; however, we note that this control also leads to a loss of both s diffusers and skeptics. 

 \begin{figure}[H]
	\centering
		\includegraphics[width=0.7\textwidth]{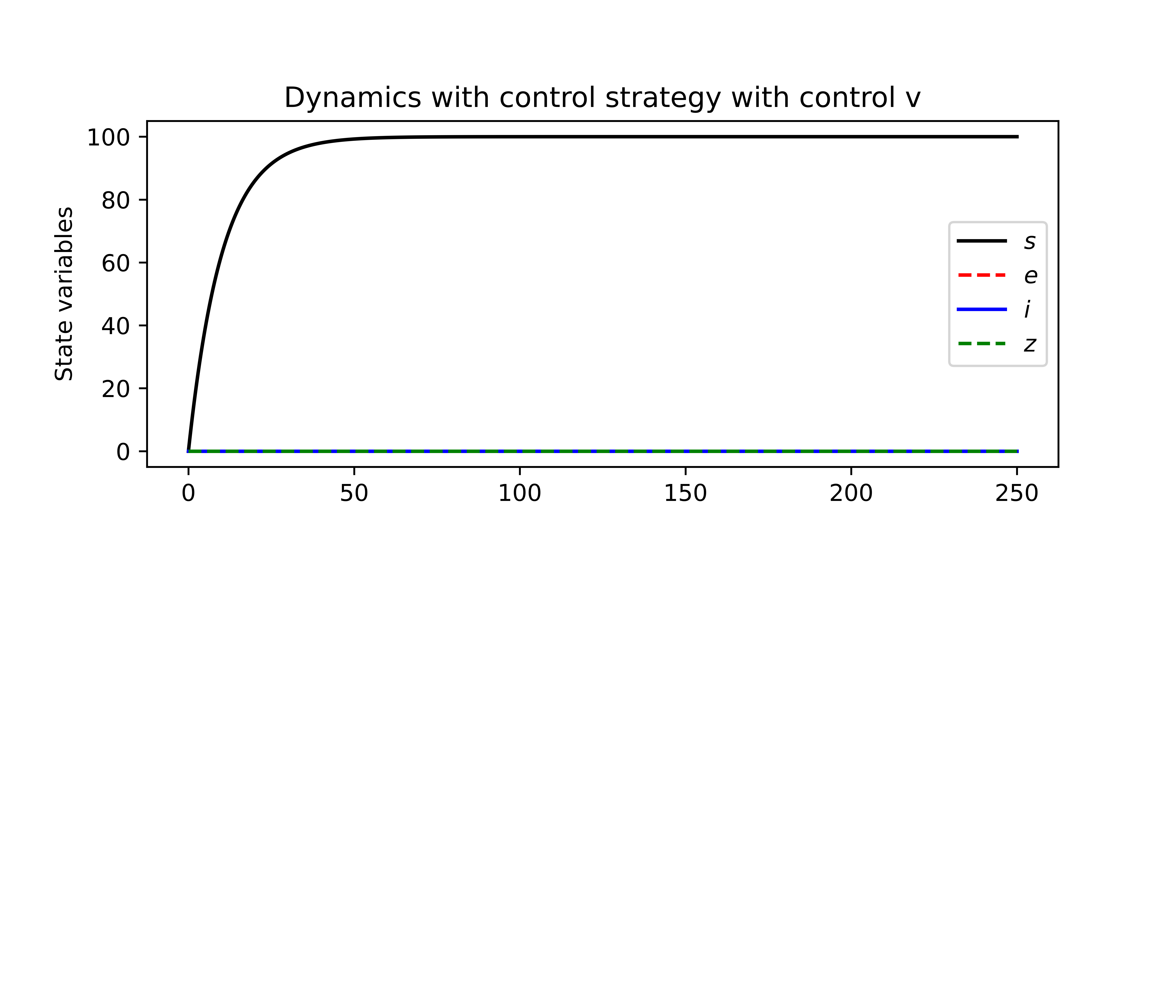}
	\caption{Dynamics of the model with the control $v$}
			\label{fig1}
 \end{figure}

  \begin{figure}[H]
	\centering
		\includegraphics[width=0.7\textwidth]{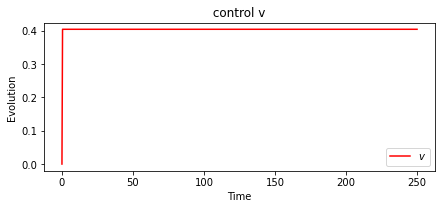}
	\caption{Optimal control v for the SEIR optimal control problem }
			\label{fig1}
 \end{figure}

  \subsubsection{Case 2: Applying Only Control  $u$ , $v$ and $w$} 
In this strategy, we implemented the three controls as an intervention to eradicate the rumor from the community or population. Figure \ref{figcontroluvw} shows that the number of infectious individuals and sceptics is zero for some time before they start to increase progressively, with a clear progression of sceptics above broadcasters or infected individuals. Consequently, the application of this strategy is effective in eradicating rumor as a community disease in a specified period of time.

   \begin{figure}[H]
	\centering
		\includegraphics[width=0.7\textwidth]{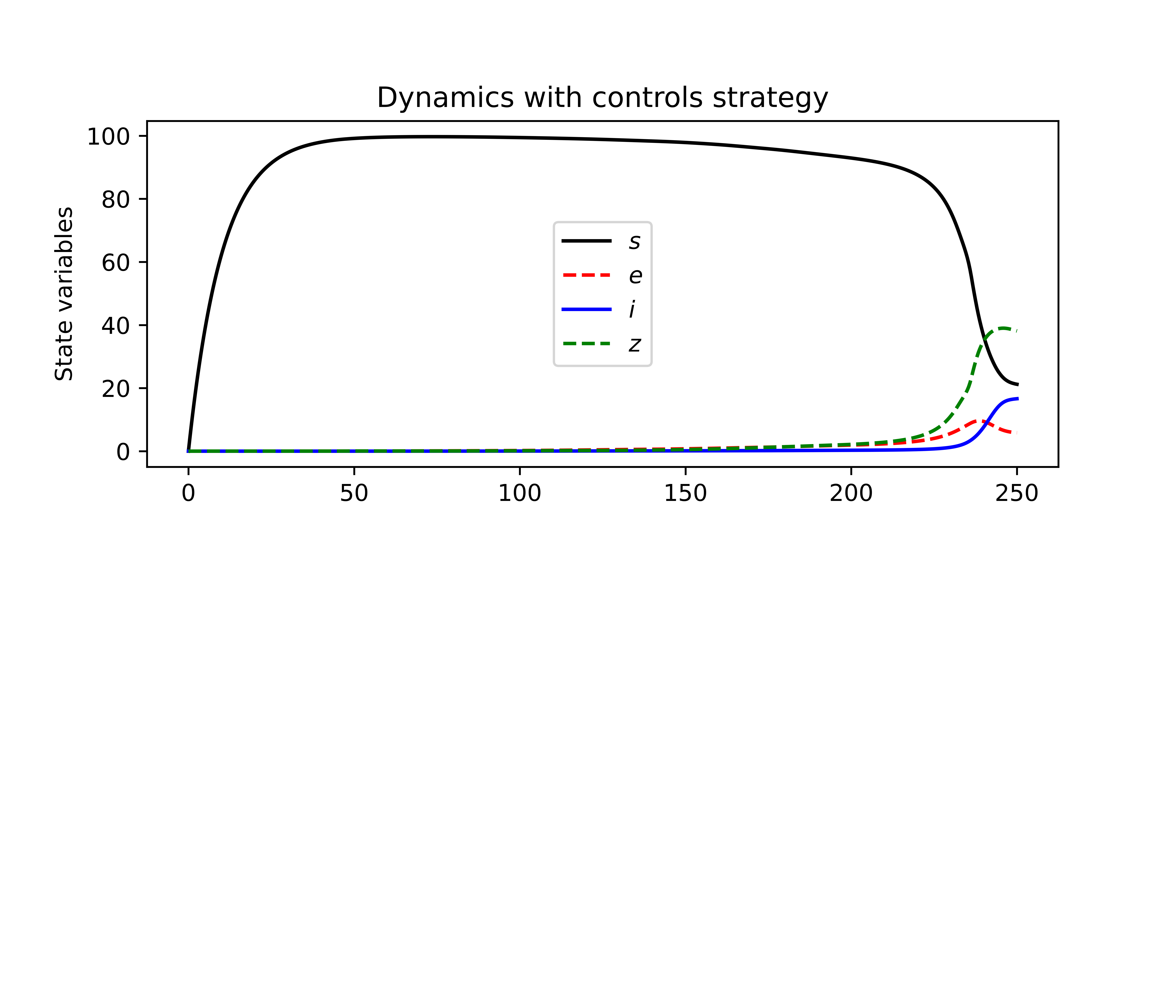}
	\caption{Dynamics of the SEIR model for all optimals controls u , v and w applied }
			\label{figcontroluvw}
 \end{figure}

 \begin{figure}[H]
	\centering
	\subfigure[\label{fig:states-all-sc3}]{\includegraphics[width=0.45\linewidth]{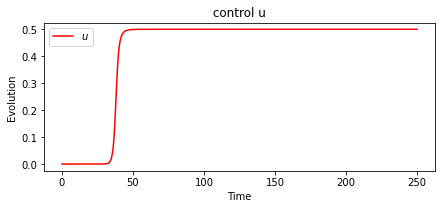}}
	\subfigure[\label{fig:states-all-sc3-1-e3}]{\includegraphics[width=0.45\linewidth]{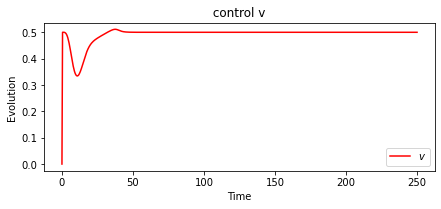}}
	\subfigure[\label{fig:states-all-sc3-3-e2}]{\includegraphics[width=0.45\linewidth]{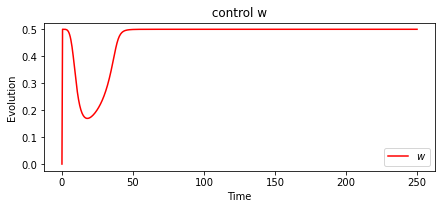}}
	\caption{optimals controls u , v and w of the third strategy. }
	\label{fig:states-sc3}	
\end{figure}

\newpage 	
\section{Conclusions and future Work}
In this paper, we give a new simple mathematical model which describes the dynamics of rumor propagation. The model is based on two compartmental models by combining them in order to take into account more factors that are involved in the dynamic. Three control strategies were introduced, and referring to the introduction of three new variables , $i = 1, 2, 3,$ we could study and combine several scenarios in order to see the impact and the effect of each one of these controls on the reduction of the rumor spread. The goal is achieved and the numerical resolution of the system with difference equations as well as the numerical simulations enabled us to compare and see the difference between each scenario in a concrete way. The purpose of the work is achieved and we have proved the effectiveness of our strategy and its importance in fighting the spread of any rumor  like throughout any social network.

  \end{document}